\pdfoutput=1
\documentclass[journal,twoside,web]{ieeecolor}
\usepackage{generic}
\usepackage{cite}
\usepackage{amsmath,amssymb,amsfonts}
\usepackage{algorithm}
\usepackage{algpseudocode}
\usepackage{graphicx}
\usepackage{epstopdf}
\usepackage{optidef}
\usepackage{textcomp}
\usepackage{csquotes}
\usepackage{url}
\usepackage{multicol}
\usepackage{color}
\usepackage{tabularray}
\newtheorem{defi}{Definition}

\newtheorem{theorem}{Theorem}
\newtheorem{crl}{Corollary}

\newtheorem{proposition}{Proposition}
\newtheorem{assumption}{Assumption}
\allowdisplaybreaks

\def\BibTeX{{\rm B\kern-.05em{\sc i\kern-.025em b}\kern-.08em
    T\kern-.1667em\lower.7ex\hbox{E}\kern-.125emX}}
\markboth{\hskip25pc IEEE TRANSACTIONS AND JOURNALS TEMPLATE}
{Tanzanakis and Lygeros: Multi-Step Optimal Tracking Control of Unknown Nonzero-Sum Games based on Least Squares and Linear Programming: An Application to a Fully-Automated, Dual-Hormone Artificial Pancreas}

\begin{document}
\title{Multi-Step Optimal Tracking Control of \\ Unknown Nonzero-Sum Games based on\\ Least Squares and Linear Programming:\\ An Application to a Fully-Automated, Dual-Hormone Artificial Pancreas}
\author{Alexandros Tanzanakis, \IEEEmembership{Member, IEEE}, and John Lygeros, \IEEEmembership{Fellow, IEEE}
\thanks{Alexandros Tanzanakis and John Lygeros: Department of Information Technology and Electrical Engineering, ETH Zurich, Switzerland, {\tt\small\{atanzana,jlygeros\}@ethz.ch}.}
\thanks{This research work was supported by the European Research Council (ERC) under the project OCAL, grant number 787845.}
}

\maketitle

\begin{abstract}
We consider the problem of optimal tracking control of unknown discrete-time nonlinear nonzero-sum games. The related state-of-art literature is mostly focused on Policy Iteration algorithms and multiple neural network approximation, which may lead to practical implementation challenges and high computational burden. To overcome these problems, we propose a novel Q-function-based multi-step Value Iteration algorithm, which provides the potential to accelerate convergence speed and improve the quality of solutions, with an easy-to-realize initialization condition. A critic-only least squares implementation approach is then employed, which alleviates the computational complexity of commonly used multiple neural network-based methods. Afterwards, by introducing the coupled Bellman operator, a novel linear programming approach is derived, based on which Nash equilibria can be approximately computed by solving a set of tractable finite-dimensional optimization problems. We evaluate the tracking control capabilities of the proposed algorithms to the problem of fully-automated dual-hormone (i.e., insulin and glucagon) glucose control in Type 1 Diabetes Mellitus. The U.S. FDA-accepted DMMS.R simulator from the Epsilon Group is used to conduct extensive in-silico clinical studies on virtual patients under a variety of completely unannounced meal and exercise scenarios. Simulation results demonstrate the high reliability and exceptional performance of the proposed multi-step algorithmic framework to critical complex systems.
\end{abstract}

\begin{IEEEkeywords}
approximate dynamic programming, artificial pancreas, data-driven control, diabetes mellitus, nonzero-sum games, reinforcement learning.
\end{IEEEkeywords}

\section{Introduction}
\label{sec:introduction}
\IEEEPARstart{M}{any} realistic complex systems involve more than one control input \cite{b1,b2,b3}. By treating control inputs as strategies employed by different players, game theoretical control methods can be utilized by formulating a multiplayer game \cite{b1,b4}. Under this setting, players employ concurrent strategies (i.e., control inputs) to optimize coupled cost functions, until they converge to a Nash equilibrium (NE) for which no player can achieve a better performance outcome by individually modifying its own strategy. In recent years, nonzero-sum games (NZSGs) \cite{b5} have been receiving a great deal of attention in the learning-based control community. In contrast to fully-cooperative \cite{b6} and zero-sum games \cite{b7}, NZSGs assume that neither player is fully cooperative nor fully competitive. NE solutions for multiplayer NZSGs can be theoretically obtained by solving a set of coupled Bellman equations \cite{b1,b4,b8}. However, it is generally very difficult or impossible to compute analytic NE solutions due to the nonlinearity and coupling of the associated equations. To overcome these challenges, approximate dynamic programming (ADP) \cite{b1,b8}, merging reinforcement learning (RL) control methods with function approximation, is widely employed to enable approximate optimal control.\\
Numerous ADP algorithms have been proposed to compute approximate NE solutions for continuous-time and discrete-time NZSGs and related graphical games. The vast majority in state-of-art literature is focused on Policy Iteration (PI) algorithms \cite{b9,b10,b11,b12,b13,b14,b15,b16,b17,b18,b19,b20,b21,b22}. PI provides relatively fast convergence to approximate optimal solutions, but requires an initial set of stabilizing control policies; this can lead to critical implementation challenges for complex systems. A few research works have considered Value Iteration (VI) algorithms \cite{b23,b24,b25,b26,b27}; they provide theoretical monotonicity and convergence guarantees but only for NZSGs and graphical games with linear dynamics. For the classical case of single-input systems, VI is well-known to benefit from relaxed initialization conditions, at the price of achieving slower convergence compared to PI \cite{b28,b29,b30}.\\
Motivated by PI and VI, the derivation of novel algorithms which attempt to unify the merits of both methods, has recently attracted attention. Towards this direction, multi-step RL \cite{b31} introduces a limited lookahead data horizon in the policy evaluation and/or policy improvement stages of a respective RL algorithm. Various multi-step PI and VI algorithms have been proposed for discounted finite Markov Decision Processes (MDPs) \cite{b32,b33,b34,b35,b36,b37,b38} and single-input dynamical systems \cite{b39,b40}, which address the positive impact of exploiting multi-step trajectory data to convergence speed. However, for the derived multi-step PI algorithms, the requirement for an initial stabilizing control policy still holds. Furthermore, while multi-step PI has been extended to discrete-time NZSGs \cite{b41}, this is not the case for multi-step VI, remaining an important open problem.\\
In contrast to least squares (LS) and neural network (NN)-based optimal control approaches widely employed in the ADP literature, the linear programmming (LP) approach \cite{b42,b43,b44} exploits the theoretically proven monotonicity and contractivity properties of the discrete-time Bellman operator to construct an infinite-dimensional optimization problem whose solution concurs with the optimal solution of the discrete-time Bellman equation. Due to potential intractability issues, approximation methods are employed to formulate easy-to-solve, tractable finite-dimensional linear programs \cite{b48}. The LP approach has received interest in both model-based and model-free control, although currently only for single-input deterministic and stochastic systems \cite{b40,b45,b46,b47,b48,b49,b50}. Therefore, it is of major importance to investigate and extend this approach to multiplayer game settings.\\
Learning-based control methods have shown great potential to be successfuly applied on problems related to personalized medicine of chronic diseases, e.g., on Diabetes Mellitus (DM) \cite{b51,b52}. The pancreatic beta cells, which are responsible for producing a glucose-decreasing hormone known as insulin, are permanently destroyed by the immune system in Type 1 DM (T1DM) \cite{b53,b54}, a serious chronic disease. T1DM patients suffer from critical hyperglycemia, a condition that has extremely serious long-term effects. As a result, they are required to receive life-long exogenous insulin \cite{b55}. To this end, the artificial pancreas (AP) has emerged as the most cutting-edge closed-loop T1DM treatment option \cite{b56}. In such a system, a control algorithm closes the loop between a continuous glucose monitor (CGM) and an insulin pump by computing the exact amount of insulin that the pump should administer in response to CGM measurements.\\
The vast majority of currently available AP systems use only insulin as a control input. Despite the relevant success of these single-hormone (SH) AP systems in basic lifestyle scenarios, there are various challenges to be tackled \cite{b57,b58,b59,b60}.  In particular, a T1DM patient may experience severe insulin induced hypoglycaemia due to patient-associated metabolic delays (e.g., affecting insulin action) and gradual dysfunction of pancreatic alpha cells, leading to severely impaired secretion of the glucose-elevating hormone known as glucagon. Therefore, one promising approach is the introduction of exogenous glucagon infusion as an additional control input to the process. This enables the design of the so-called dual-hormone (DH) AP systems. The related state-of-art literature, despite showing promising results in reducing hypoglycaemia in comparison to SH AP systems, is currently at a relatively early stage of algorithmic and medical development \cite{b60,b61,b62,b63,b64,b65,b66,b67,b68,b69,b70,b71}. One of the most critical open challenges up-to-date is the design of a fully-automated, DH AP system, where a control algorithm automatically computes clinically safe doses of insulin and glucagon to be administered to the patient based only on closed-loop CGM measurements and without any intervention from the patient (e.g., in the form of meal and exercise announcements to the AP system). This would enable a significantly more convenient lifestyle for all T1DM patients worldwide.\\
The contributions of this work are summarized as follows:
\begin{itemize}
\item[1)] We derive a novel Q-function-based multi-step VI (MSQVI) algorithm for optimal tracking control of completely unknown discrete-time deterministic nonlinear NZSGs. The proposed algorithm benefits from strong monotonicity and convergence guarantees, while it can potentially achieve higher convergence speed and improved NE solutions (in terms of tracking control performance) compared to standard VI. It also enjoys an easy-to-realize initialization condition.
\item[2)] A critic-only LS approach with linear function approximation is employed to implement the proposed MSQVI algorithm, which sharply reduces the computational burden compared to commonly used multiple NN approximation methods \cite{b8,b9,b10,b11,b12,b13,b14,b15,b16,b17,b18,b19,b20,b21,b22,b23,b24,b25,b26,b27}.
\item[3)] We introduce a discrete-time coupled Bellman operator, which inherits the monotone contraction property of the standard discrete-time Bellman operator. Based on the derived operator, we show how to construct a set of infinite-dimensional LPs, whose solutions constitute NE since they coincide with the optimal solutions to the discrete-time coupled Bellman optimality equations. To tackle possible intractability issues, we employ a critic-only, data-driven approximation method which leads to the derivation of tractable, finite-dimensional LPs. This enables the derivation of a LP-based MSQVI formulation.
\item[4)] We evaluate the performance and suitability of the proposed model-free MSQVI framework to personalized drug delivery systems, in particular the fully-automated, dual-hormone (i.e., insulin and glucagon) glucose control of patients diagnosed with T1DM. A clinically validated metabolic simulator is used to conduct extensive in-silico clinical studies on representantive virtual adult subjects, under a variety of completely unannounced meal and exercise scenarios.
\end{itemize}

The structure of the paper is given as follows. The problem definition is given in Section II. The proposed MSQVI algorithm is presented and analyzed in Section III. A data-driven critic-only implementation approach based on LS is proposed in Section IV. The proposed LP approach for discrete-time NZSGs is derived in Section V, and the conducted in-silico clinical studies are presented and discussed in Section VI. Finally, conclusions are given in Section VII.\\
\textbf{Notation.} $\mathbb{N}$ and $\mathbb{N}_{0}$ are the sets of natural numbers and natural numbers including $0$ respectively. $\mathbb{R_{+}}$ and $\mathbb{R_{++}}$ refer to the sets of non-negative and positive real numbers respectively. $I_{n}$ defines an identity matrix of size $n\times n$. $\mathbb{S}^{n}_{++}$ defines the set of symmetric positive definite matrices of size $n\times n$. $\overline{\sigma}(A)$ and $\underline{\sigma}(A)$ define the maximum and minimum singular value of a matrix $A\in \mathbb{R}^{m\times n}$ respectively. \\ \textit{Regarding measurement units:} mg defines milligrams, while mg/dL defines milligrams per deciliter and mg/5mins refers to milligrams per $5$ minutes. U defines units, while U/5mins defines units per $5$ minutes.  

\section{Problem Statement}
We study the following class of discrete-time nonlinear systems with $N$ control inputs (or players) given by
\begin{equation}
\label{eq:1}
x_{k+1} = f(x_k)+\sum_{i=1}^{N}g_{i}(x_k)u_{ik},
\end{equation}
where $x\in \mathcal{X}\subset \mathbb{R}^{n}$ and $u_i \in \mathcal{U}_i\subset \mathbb{R}^{m_i}$ denote the system state and control input applied by player $i\in \mathcal{P}\triangleq\{1,2,\ldots,N\}$ at time step $k\in \mathbb{N}_0$, while $f(x):\mathcal{X} \rightarrow \mathcal{X}$ and $g_{i}(x):\mathcal{X}\rightarrow \mathcal{Y}_i\subset \mathbb{R}^{n\times m_i}$ define the drift and control input dynamics respectively. Furthermore, let $r\in \mathcal{X}$ be a bounded reference input with dynamics defined by an exosystem
\begin{equation}
\label{eq:2}
r_{k+1} = h(r_k),
\end{equation}
with $h:\mathcal{X}\rightarrow \mathcal{X}$. The variables $(x,r)$ are centralized and available to all players $i\in \mathcal{P}$. The main objective is the computation of a set of state-reference feedback control policies $\{\mu_{i}(x,r)\}_{i=1}^{N}$ with $\mu_i:\mathcal{X}^{2}\rightarrow \mathcal{U}_i$, such that the closed-loop nonlinear system \eqref{eq:1} tracks a desired reference signal \eqref{eq:2}.
\begin{assumption}
\label{ass:1}
$\mathcal{X}$, $\{\mathcal{U}_i\}_{i=1}^{N}$ and $\{\mathcal{Y}_i\}_{i=1}^{N}$ are compact sets which contain the origin. The functions $f(x)$ and $\{g_i(x)\}_{i=1}^{N}$ are continuously differentiable on $\mathcal{X}$ with $f(0)=g_i(0)=0$. The function $h(r)$ is Lipschitz continuous on $\mathcal{X}$ with $h(0)=0$. The functional forms of $f(x)$, $\{g_i(x)\}_{i=1}^{N}$ and $h(r)$ are unknown. The system \eqref{eq:1} is controllable on $\mathcal{X}$.
\end{assumption}
By defining $-i \triangleq \{j: j\in \mathcal{P}, j\neq i\}$ and $\mathcal{Z}_i= \mathcal{X}^{2}\times \mathcal{U}_i\times \mathcal{U}_{-i}$ for all $i\in \mathcal{P}$, each player $i$ contributes to the goal of optimal tracking control through minimization of an associated infinite-horizon performance cost
\begin{equation}
\label{eq:3}
J^{\mu_i,\mu_{-i}}_{i}(x_0,r_0) = \sum_{k=0}^{\infty} \bigg[\gamma^{k}l_{i}\big(x_k,r_k,\mu_{i}(x_k,r_k),\mu_{-i}(x_k,r_k)\big)\bigg],
\end{equation}
where $\gamma \in (0,1]$ is the discount factor and $l_{i}:\mathcal{Z}_i\rightarrow \mathbb{R}_{+}$ is the stage cost function
\begin{align}
\label{eq:4}
&l_{i}\big(x,r,\mu_{i}(x,r),\mu_{-i}(x,r)\big) = (x-r)^{T}S_{ii}(x-r)\nonumber \\&+ \sum_{j=1}^{N} \mu_{j}^{T}(x,r)R_{ij}\mu_{j}(x,r)
\end{align}
with $S_{ii} \in \mathbb{S}_{++}^{n}$ and $R_{ij} \in \mathbb{S}_{++}^{m_i}$. We note that a discount factor $\gamma \in (0,1)$ is required to ensure that \eqref{eq:3} is finite and $\gamma=1$ can only be used if the reference dynamics \eqref{eq:2} are asympotically stable \cite{b72,b73,b74}.\\
We are interested in the case where the mathematical expressions of the system \eqref{eq:1} and reference input \eqref{eq:2} are unknown, although their values can be observed through simulations and experiments. Towards this direction, we define the Q-function $Q^{\mu_{i},\mu_{-i}}_{i}:\mathcal{Z}_i\rightarrow \mathbb{R}_{+}$ for all $i$ as
\begin{align}
\label{eq:5}
Q^{\mu_{i},\mu_{-i}}_{i}(x_k,r_k,a_{ik},a_{-ik}) =& l_{i}(x_k,r_k,a_{ik},a_{-ik})\nonumber \\
&+\gamma J^{\mu_{i},\mu_{-i}}_{i}(x_{k+1},r_{k+1}).
\end{align}
The Q-function encodes the cost of applying control inputs $(a_{i},a_{-i})\in \mathcal{U}_{i}\times \mathcal{U}_{-i}$ at state $x$ and for reference $r$ and then following control policies $\{\mu_{i},\mu_{-i}\}$ afterwards.
\begin{assumption}
\label{ass:2}
$Q^{\mu_{i},\mu_{-i}}_{i}(x,r,a_i,a_{-i})$ is a continuously differentiable function on $\mathcal{Z}_i$ for all $i \in \mathcal{P}$.
\end{assumption}
Clearly,
\begin{equation}
\label{eq:6}
Q^{\mu_{i},\mu_{-i}}_{i}\big(x,r,\mu_{i}(x,r),\mu_{-i}(x,r)\big) = J^{\mu_{i},\mu_{-i}}_{i}(x,r).
\end{equation}
Therefore, \eqref{eq:5} can be rewritten as 
\begin{align}
\label{eq:7}
&Q^{\mu_{i},\mu_{-i}}_{i}(x_k,r_k,a_{ik},a_{-ik}) = l_{i}(x_k,r_k,a_{ik},a_{-ik}) \nonumber\\ 
&+\gamma Q^{\mu_{i},\mu_{-i}}_{i}\big(x_{k+1},r_{k+1},\mu_{i}(x_{k+1},r_{k+1}),\mu_{-i}(x_{k+1},r_{k+1})\big).
\end{align}
For all $i\in \mathcal{P}$, the optimal Q-function $Q^{\star}_{i}$ satisfies the discrete-time coupled Bellman optimality equation \cite{b1,b11,b15,b20,b22} 
\begin{align}
\label{eq:8}
&Q^{\star}_{i}(x_k,r_k,a_{ik},a_{-ik}) = l_{i}(x_k,r_k,a_{ik},a_{-ik})\nonumber \\ 
&+\gamma \min_{u_i}Q^{\star}_{i}\big(x_{k+1},r_{k+1},u_{i},\mu_{-i}(x_{k+1},r_{k+1})\big)
\end{align}
and the associated optimal control policy is given by
\begin{equation}
\label{eq:9}
\mu^{\star}_{i}(x,r) = \underset{u_{i}}{\mathrm{argmin}}\enspace Q^{\star}_{i}\big(x,r,u_{i},\mu_{-i}(x,r)\big).  
\end{equation}
In NZSGs, all players $i\in \mathcal{P}$ have the same competitive hierarchical level and try to achieve optimal control through convergence to a Nash equilibrium (NE), defined as follows.
\begin{defi}[\cite{b1,b4}]
\label{def:1}
A set of control policies $\{\mu^{\star}_{i}(x,r)\}_{i=1}^{N}$ is said to constitute a Nash equilibrium (NE) for the discrete-time nonlinear system \eqref{eq:1}, if for all $i\in \mathcal{P}$:
\begin{equation}
\label{eq:10}
J^{\star}_{i} = J^{\mu^{\star}_i,\mu^{\star}_{-i}}_{i}(x,r)\leq  J^{u_i,\mu^{\star}_{-i}}_{i}(x,r), \enspace \text{for all } u_i \in \mathcal{U}_i.
\end{equation}
\end{defi}
The following theorem provides some important results on closed-loop stability and the derivation of NE solutions for the considered class of discrete-time multiplayer NZSGs \eqref{eq:1}.
\begin{theorem}
\label{thm:1}
Let Assumptions 1 and 2 hold. Assume that there exists a positive definite solution $Q^{\star}_i$ of the discrete-time coupled Bellman optimality equation
\begin{align}
\label{eq:11}
&Q^{\star}_{i}(x_k,r_k,a_{ik},a_{-ik}) = l_{i}(x_k,r_k,a_{ik},a_{-ik})\nonumber \\ 
&+\gamma Q^{\star}_{i}\big(x_{k+1},r_{k+1},\mu^{\star}_{i}(x_{k+1},r_{k+1}),\mu^{\star}_{-i}(x_{k+1},r_{k+1})\big)
\end{align} 
for all $i\in \mathcal{P}$, where
\begin{equation}
\label{eq:12}
\mu^{\star}_{i}(x,r) = \underset{u_i}{\mathrm{argmin}}\enspace Q^{\star}_{i}\big(x,r,u_{i},\mu^{\star}_{-i}(x,r)\big)
\end{equation}
and $\mu^{\star}_{i}(0,0)=0$. Let also define the tracking error as $e_k=x_k-r_k$. Then, for all time steps $s \geq k+1$:
\begin{enumerate}
\item[1)] If $\gamma=1$, the set of control policies $\{\mu^{\star}_{i}(x,r)\}_{i=1}^{N}$ can make the tracking error $e_s$ for the nonlinear system \eqref{eq:1} locally asymptotically stable. Otherwise, $\{\mu^{\star}_{i}(x,r)\}_{i=1}^{N}$ can make $e_s$ for \eqref{eq:1} sufficiently small by setting $\gamma$ sufficiently close to 1.
\item[2)] The set of control policies $\{\mu^{\star}_{i}(x,r)\}_{i=1}^{N}$ constitutes a NE solution for \eqref{eq:1}.
\item[3)] The NE outcome for each player $i\in \mathcal{P}$ is given by
\begin{align*}
J^{\star}_{i} = Q^{\star}_{i}(x_s,r_s,\mu^{\star}_{i}(x_s,r_s),\mu^{\star}_{-i}(x_s,r_s)).
\end{align*}
\end{enumerate}
\end{theorem}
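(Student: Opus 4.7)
The plan is to prove the three claims in the stated order, using a Lyapunov-type argument for stability (claim 1) and a dynamic-programming comparison argument for the Nash property (claim 2), with claim 3 falling out essentially by substitution. The Bellman optimality equation \eqref{eq:11} and the pointwise minimization condition \eqref{eq:12}, together with the positive-definiteness hypothesis on $Q^{\star}_i$ and the structure of the stage cost \eqref{eq:4}, are the only analytical ingredients I would rely on.

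For claim 1, I would introduce the candidate Lyapunov function $V^{\star}_i(x,r) \triangleq Q^{\star}_i\bigl(x,r,\mu^{\star}_i(x,r),\mu^{\star}_{-i}(x,r)\bigr)$, which is positive definite by assumption, vanishes at $(0,0)$ because $\mu^{\star}_i(0,0)=0$, $f(0)=g_i(0)=0$, $h(0)=0$, and is continuous on the compact set $\mathcal{X}^2$. Evaluating \eqref{eq:11} along the closed-loop trajectory generated by $\{\mu^{\star}_i\}_{i=1}^{N}$ yields the exact identity
\begin{equation*}
V^{\star}_i(x_k,r_k)-\gamma V^{\star}_i(x_{k+1},r_{k+1}) = l_i\bigl(x_k,r_k,\mu^{\star}_i(x_k,r_k),\mu^{\star}_{-i}(x_k,r_k)\bigr).
\end{equation*}
For $\gamma=1$ the right-hand side is bounded below by $e_k^{T}S_{ii}e_k\geq 0$, so $V^{\star}_i$ is non-increasing and strictly decreasing whenever $e_k\neq 0$; a standard LaSalle/Lyapunov argument on the compact set $\mathcal{X}^2$ then delivers local asymptotic stability of $e_s$. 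For $\gamma\in(0,1)$ the decrement becomes $-l_i(\cdot)+(1-\gamma)V^{\star}_i(x_{k+1},r_{k+1})$, which yields ultimate boundedness of $e_s$ with a bound that collapses to zero as $\gamma\to 1$, matching the second part of the statement.

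For claim 2, I would fix $i\in\mathcal{P}$ and an arbitrary admissible policy $u_i$ for player $i$ (other players use $\mu^{\star}_{-i}$). From \eqref{eq:12} applied pointwise,
\begin{equation*}
Q^{\star}_i\bigl(x_k,r_k,\mu^{\star}_i(x_k,r_k),\mu^{\star}_{-i}(x_k,r_k)\bigr)\leq Q^{\star}_i\bigl(x_k,r_k,u_i(x_k,r_k),\mu^{\star}_{-i}(x_k,r_k)\bigr),
\end{equation*}
and unfolding the right-hand side once via the Q-function definition \eqref{eq:5}–\eqref{eq:7} produces
\begin{equation*}
J^{\star}_i(x_k,r_k)\leq l_i\bigl(x_k,r_k,u_i,\mu^{\star}_{-i}\bigr)+\gamma J^{\star}_i(x^{u}_{k+1},r_{k+1}),
\end{equation*}
where $x^{u}_{k+1}$ is the successor state under the mixed action $(u_i,\mu^{\star}_{-i})$. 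Iterating this one-step inequality $K$ times along the trajectory generated by $(u_i,\mu^{\star}_{-i})$ gives
\begin{equation*}
J^{\star}_i(x,r)\leq \sum_{k=0}^{K-1}\gamma^{k}l_i\bigl(x^{u}_k,r_k,u_i,\mu^{\star}_{-i}\bigr)+\gamma^{K}J^{\star}_i(x^{u}_K,r_K),
\end{equation*}
and letting $K\to\infty$ produces $J^{\star}_i(x,r)\leq J^{u_i,\mu^{\star}_{-i}}_i(x,r)$, which is precisely \eqref{eq:10}. Claim 3 then follows directly from \eqref{eq:6} applied at time $s$ together with the equilibrium property just established.

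The main obstacle will be vanishing of the residual $\gamma^{K}J^{\star}_i(x^{u}_K,r_K)$ as $K\to\infty$ in the iteration for claim 2. For $\gamma\in(0,1)$ this is immediate from compactness of $\mathcal{X}^2\times\mathcal{U}_i\times\mathcal{U}_{-i}$ and continuity of $Q^{\star}_i$ (Assumption 2), which bound $J^{\star}_i$ uniformly. For $\gamma=1$ the argument is more delicate and genuinely requires the asymptotic stability of the tracking error established in claim 1 — I would have to use that stability to ensure $J^{\star}_i(x^{u}_K,r_K)\to 0$, which in turn means the claims must be proved in the stated order so that claim 1 rests only on the Bellman identity \eqref{eq:11} and not on the equilibrium property of claim 2. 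Handling the boundary case $\gamma=1$ carefully, in particular invoking the hypothesis that $h$ is asymptotically stable whenever $\gamma=1$, is the step most likely to require additional technical care.
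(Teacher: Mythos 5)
Your proposal is correct and follows essentially the same route as the paper: both derive the dissipation identity $V^{\star}_i(x_k,r_k)-\gamma V^{\star}_i(x_{k+1},r_{k+1})=l_i(\cdot)$ from \eqref{eq:11} for the stability claim (the paper packages it as the $\gamma^{s}$-weighted candidate and invokes Barbalat's lemma where you invoke LaSalle), and both establish the Nash property via the pointwise minimization \eqref{eq:12} applied along the trajectory under $(u_i,\mu^{\star}_{-i})$, with claim 3 by substitution. Your explicit telescoping and the attention you give to the vanishing of the tail term $\gamma^{K}J^{\star}_i(x^{u}_K,r_K)$ when $\gamma=1$ is a more careful rendering of the paper's terse add-and-subtract step, not a different argument.
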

\begin{proof}
To simplify the presentation of the proof, we introduce the following compact notation
\begin{align}
\label{eq:13}
F_i^{\mu_{i},\mu_{-i}}(x,r) =& F_i\big(x,r,\mu_{i}(x,r),\mu_{-i}(x,r)\big)\nonumber \\
F_i^{\star,\mu_{i},\mu_{-i}}(x,r) =& F_i^{\star}\big(x,r,\mu_{i}(x,r),\mu_{-i}(x,r)\big)
\end{align}
for generic functions $F_i:\mathcal{Z}_i\rightarrow \mathbb{R}_{+}$ and $F_i^{\star}:\mathcal{Z}_i\rightarrow \mathbb{R}_{+}$ for all $i\in \mathcal{P}$.
\begin{itemize}
\item[1)] Let $\gamma^{s}Q^{\star,\mu^{\star}_{i},\mu^{\star}_{-i}}_{i}(x_s,r_s)$ be a candidate Lyapunov function for all $i \in \mathcal{P}$. By defining the difference equation as $D_i\big(\gamma^{s}Q^{\star,\mu^{\star}_{i},\mu^{\star}_{-i}}_{i}(x_s,r_s)\big)$$ = \gamma^{s+1}Q^{\star,\mu^{\star}_{i},\mu^{\star}_{-i}}_{i}(x_{s+1},r_{s+1})-\gamma^{s}Q^{\star,\mu^{\star}_{i},\mu^{\star}_{-i}}_{i}(x_s,r_s)$, we get
\begin{align}
\label{eq:14} 
D_i\big(\gamma^{s}Q^{\star,\mu^{\star}_{i},\mu^{\star}_{-i}}_{i}(x_s,r_s)\big)
&=-\gamma^{s}l_i^{\mu^{\star}_{i},\mu^{\star}_{-i}}(x_s,r_s) \leq 0.
\end{align}
If $\gamma=1$ (which can only be used if $\lim_{s\to\infty} r_{s}\rightarrow 0$), then according to Barbalat's Extension Lemma \cite{b74},\cite[p. 113]{b75}, the states $x$ of \eqref{eq:1} converge in a region where $\lim_{s\to \infty} D_i\big(\gamma^{s}Q^{\star,\mu^{\star}_{i},\mu^{\star}_{-i}}_{i}(x_s,r_s)\big) \rightarrow 0$. Based on \eqref{eq:14}, local asymptotic stability of the tracking error $e_s$ on $\mathcal{X}$ for the closed-loop system \eqref{eq:1} is proved under $\{\mu^{\star}_{i}(x,r)\}_{i=1}^{N}$, i.e., $\lim_{s\to\infty} x_{s}\rightarrow 0$. Otherwise, following \cite{b74,b76}, the tracking error $e_s$ for \eqref{eq:1} can be made sufficiently small by setting $\gamma$ sufficiently close to $1$.
\item[2)] For all $i\in \mathcal{P}$, we define
\begin{align*}
J^{\mu_i,\mu^{\star}_{-i}}_{i}(x_s,r_s) =& \sum_{m=s}^{\infty} \gamma^{(m-s)}l^{\mu_{i},\mu^{\star}_{-i}}_{i}(x_m,r_m).
\end{align*}
By adding and substracting $Q^{\star,\mu^{\star}_i,\mu^{\star}_{-i}}_{i}(x_s,r_s)$ in the right hand side yields
\begin{align*}
J^{\mu_i,\mu^{\star}_{-i}}_{i}(x_s,r_s) =& \sum_{m=s}^{\infty} \gamma^{(m-s)}l^{\mu_{i},\mu^{\star}_{-i}}_{i}(x_m,r_m)\\
-&Q^{\star,\mu^{\star}_i,\mu^{\star}_{-i}}_{i}(x_s,r_s)\\
+& Q^{\star,\mu^{\star}_i,\mu^{\star}_{-i}}_{i}(x_s,r_s)\\
\overset{\eqref{eq:12}}{\geq} & Q^{\star,\mu^{\star}_i,\mu^{\star}_{-i}}_{i}\big(x_s,r_s)\\
=& J^{\mu^{\star}_i,\mu^{\star}_{-i}}_{i}(x_s,r_s).
\end{align*}
Based on Definition 1, the set of control policies $\{\mu^{\star}_{i}(x,r)\}_{i=1}^{N}$ constitutes a NE solution for \eqref{eq:1}.
\item[3)] According to \eqref{eq:6} and \eqref{eq:10}, it is easy to observe that
\begin{align*}
J^{\star}_{i} =& J^{\mu^{\star}_i,\mu^{\star}_{-i}}_{i}(x_{s},r_{s})\\
=& Q^{\star,\mu^{\star}_{i},\mu^{\star}_{-i}}_{i}(x_s,r_s)
\end{align*}
for all $i\in \mathcal{P}$.
\end{itemize} 
\end{proof}
\section{Multi-Step VI For Model-Free Optimal Tracking Control Of NZSGs}
In this section, we derive a novel Q-function-based multi-step VI algorithm for model-free optimal tracking control of discrete-time multiplayer NZSGs. Algorithm 1 shows the proposed method, which we call MSQVI.
\begin{algorithm}
        \caption{The proposed MSQVI algorithm.}\label{alg1}
        \begin{algorithmic}[1]
            \State  \textbf{Initialization:} Choose $Q^{0}_{i}(x,r,a_{i},a_{-i})\geq 0$ for all $i$ and arbitrary $\{\mu^{-1}_{i}(x,r)\}_{i=1}^{N}$. Set $p=0$ and $\tau\geq 0$.
            \State\textbf{Policy Improvement:} 
\begin{equation}
\label{eq:15}
\mu^{p}_{i}(x,r) = \underset{u_{i}}{\mathrm{argmin}}\enspace Q^{p}_{i}\big(x,r,u_{i},\mu^{p-1}_{-i}(x,r)\big) \enspace \forall i.
\end{equation}
            \State \textbf{Policy Evaluation:} For all $i$, solve for $Q^{p+1}_{i}\geq 0$,
\begin{align}
\label{eq:16}
&Q^{p+1}_{i}(x,r,a_{i},a_{-i})= l_{i}(x,r,a_{i},a_{-i})\nonumber \\
&+\sum_{m=1}^{H_p-1}\gamma^{m}l_{i}\big(x_m,r_m,\mu^{p}_{i}(x_m,r_m),\mu^{p}_{-i}(x_m,r_m)\big)\nonumber \\
&+\gamma^{H_p} Q^{p}_{i}\big(x_{H_p},r_{H_p},\mu^{p}_{i}(x_{H_p},r_{H_p}),\mu^{p}_{-i}(x_{H_p},r_{H_p})\big),
\end{align}
where $x_{1} = f(x)+\sum_{j=1}^{N} g_{j}(x)a_{j}$, $r_1= h(r)$, $x_{m} = f(x_{m-1})+\sum_{j=1}^{N} g_{j}(x_{m-1})\mu^{p}_{j}(x_{m-1},r_{m-1})$ and $r_m = h(r_{m-1})$ for $m>1$.
             \State \textbf{Termination of Learning Phase:} If $\|Q^{p+1}_{i}-Q^{p}_{i}\|_{\infty}\leq \tau$ for all $i$, then terminate; else set $p=p+1$, go to Step $2$ and continue.
        \end{algorithmic}
    \end{algorithm}
In contrast to conventional PI \cite{b9,b10,b11,b12,b13,b14,b15,b16,b17,b18,b19,b20,b21,b22} and VI \cite{b23,b24,b25,b26,b27} algorithms, the policy evaluation of MSQVI \eqref{eq:16} utilizes finite lookahead data, defined by $H_p\geq 1$; by setting $H_p=1$ for all $p$, MSQVI is converted to the VI algorithm. The initialization of MSQVI requires the choice of suitable Q-functions $Q^{0}_{i}\geq 0$ for all $i$ (to be discussed in the sequel) and an arbitrary set of control policies $\{\mu^{-1}_{i}(x,r)\}_{i=1}^{N}$. The algorithm utilizes a game theoretical setup widely used in the literature \cite{b9,b10,b11,b12,b13,b14,b15,b16,b17,b18,b19,b20,b21,b22,b23,b24,b25,b26,b27}. The policy evaluation scheme \eqref{eq:16} initiates round $p$ of the NZSG. All players $i\in \mathcal{P}$ interact with the nonlinear system \eqref{eq:1} by first applying exploratory actions $\{a_i\}_{i=1}^{N}$ and then employing their most recently updated control policies $\{\mu^{p}_i(x,r)\}_{i=1}^{N}$ for a finite time horizon defined by $H_p$. During a game round, all players $i\in \mathcal{P}$ have access to the state and reference values $(x,r)$. The buffer of historical data values $\big\{x,r,a_i,a_{-i},[x_m,r_m,\mu^{p}_i(x_m,r_m),\mu^{p}_{-i}(x_m,r_m)]_{m=1}^{H_p}\big\}$ is then broadcasted to all players for the computation of $Q^{p+1}_{i}\geq 0$ for all $i$ as in \eqref{eq:16}. This concludes the game round $p$. The explicit functional forms of all control policies $\{\mu_{i}^{p}\}_{i=1}^{N}$ previously employed during the game round are then broadcasted to all players so that they can proceed to policy improvement \eqref{eq:15}. We now proceed with the theoretical analysis of the proposed MSQVI algorithm.
\begin{theorem}
Let Assumptions 1 and 2 hold. For $i\in \mathcal{P}$, consider the sequences $\{Q^{p}_{i}(x,r,a_{i},a_{-i})\}_{p \in \mathbb{N}}$ and $\{\mu^{p}_{i}(x,r)\}_{p \in \mathbb{N}_0}$ generated by Algorithm 1. Assume that the initialization condition
\begin{align}
\label{eq:17}
Q^{0}_{i}(x,r,a_{i},a_{-i})\geq & l_{i}(x,r,a_{i},a_{-i})\nonumber \\
&+\gamma Q^{0}_{i}\big(x_1,r_1,\mu^{0}_{i}(x_1,r_1),\mu^{0}_{-i}(x_1,r_1)\big)
\end{align}
holds for all $i\in \mathcal{P}$ and $(x,r,a_i,a_{-i})\in \mathcal{Z}_i$, $\|g_j(x)\|_{2}$ and $\overline{\sigma}(R_{ij}R^{-1}_{jj})$ are sufficiently small for all $j\in \mathcal{P}\setminus \{i\}$. Then, for all $i\in \mathcal{P}$ and $(x,r,a_i,a_{-i})\in \mathcal{Z}_i$
\begin{itemize}
\item[1)]
\begin{align}
\label{eq:18}
Q^{p+1}_{i}(x,r,a_{i},a_{-i})\leq  & l_{i}(x,r,a_{i},a_{-i})\nonumber \\
+& \gamma Q^{p}_{i}\big(x_1,r_1,\mu^{p}_{i}(x_1,r_1),\mu^{p}_{-i}(x_1,r_1)\big)\nonumber \\
\leq & Q^{p}_{i}(x,r,a_{i},a_{-i}).
\end{align}
\item[2)]
\begin{align*}
\lim_{p\rightarrow \infty}Q^{p}_{i}(x,r,a_{i},a_{-i})=& Q^{\star}_{i}(x,r,a_{i},a_{-i}), \\
\lim_{p\rightarrow \infty}\mu^{p}_{i}(x,r)=& \mu^{\star}_{i}(x,r).
\end{align*}
\end{itemize}
\end{theorem}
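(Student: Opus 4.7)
The plan is to establish Part 1 by induction on $p$, using \eqref{eq:17} as the critical invariant that gets transported from round to round, and then to derive Part 2 by monotone convergence followed by a passage to the limit in the coupled Bellman iteration.

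For the base case $p=0$, the rightmost inequality in \eqref{eq:18} is literally the initialization hypothesis \eqref{eq:17}. The middle inequality at $p=0$ comes from iterating \eqref{eq:17} along the trajectory generated by the joint policy $\{\mu_i^0,\mu_{-i}^0\}$: starting at $(x_1,r_1,\mu_i^0(x_1,r_1),\mu_{-i}^0(x_1,r_1))$ and applying \eqref{eq:17} $H_0-1$ more times produces
\[
Q_i^0\bigl(x_1,r_1,\mu_i^0,\mu_{-i}^0\bigr) \geq \sum_{m=1}^{H_0-1} \gamma^{m-1}\, l_i\bigl(x_m,r_m,\mu_i^0,\mu_{-i}^0\bigr) + \gamma^{H_0-1}\, Q_i^0\bigl(x_{H_0},r_{H_0},\mu_i^0,\mu_{-i}^0\bigr).
\]
Multiplying by $\gamma$ and adding $l_i(x,r,a_i,a_{-i})$ reconstructs exactly the right-hand side of \eqref{eq:16}, which equals $Q_i^1(x,r,a_i,a_{-i})$; this yields the leftmost inequality in \eqref{eq:18} at $p=0$. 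Thus the base case reduces both claimed inequalities to a single telescoping of \eqref{eq:17}.

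The inductive step is to propagate the invariant
\[
Q_i^p(x,r,a_i,a_{-i}) \geq l_i(x,r,a_i,a_{-i}) + \gamma\, Q_i^p\bigl(x_1,r_1,\mu_i^p(x_1,r_1),\mu_{-i}^p(x_1,r_1)\bigr)
\]
from $p$ to $p+1$, and this is where the main difficulty lies and precisely why the smallness hypotheses on $\|g_j(x)\|_2$ and $\overline{\sigma}(R_{ij}R_{jj}^{-1})$ are needed. The one-sided argmin improvement \eqref{eq:15} guarantees a monotone decrease of $Q_i^{p+1}$ with respect to player $i$'s \emph{own} policy change while holding $\mu_{-i}^p$ fixed; however, in the next round the other players simultaneously switch from $\mu_{-i}^p$ to $\mu_{-i}^{p+1}$, and these off-agent perturbations couple back into $Q_i^{p+1}$ through both the one-step dynamics $f(x)+\sum_j g_j(x)a_j$ and the cross-quadratic stage-cost terms $\mu_j^T R_{ij}\mu_j$ in \eqref{eq:4}. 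My plan is to bound $\|\mu_j^{p+1}-\mu_j^p\|$ in terms of the decrease of $Q_j^{p+1}$ by invoking strong convexity of the argmin problem \eqref{eq:15} (ensured by $R_{jj}\in\mathbb{S}^{m_j}_{++}$), then to propagate that perturbation through Lipschitz estimates on the one-step transition and on $l_i$; the smallness of $\|g_j\|_2$ and $\overline{\sigma}(R_{ij}R_{jj}^{-1})$ then guarantees that the propagated off-agent error is strictly dominated by the on-agent monotone decrease, closing the induction.

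For Part 2, Part 1 gives a monotonically non-increasing sequence $\{Q_i^p\}$ bounded below by $0$, hence pointwise convergence on $\mathcal{Z}_i$ to a limit $Q_i^\infty\geq 0$. Compactness of $\mathcal{Z}_i$ (Assumption \ref{ass:1}) together with continuity of the iterates (Assumption \ref{ass:2}) upgrades this to uniform convergence via Dini's theorem. Passing to the limit in \eqref{eq:15}--\eqref{eq:16}, and using continuity of the $\mathrm{argmin}$ map (well-defined thanks to the strictly convex quadratic structure in $u_i$ supplied by $R_{ii}\in\mathbb{S}^{m_i}_{++}$), $Q_i^\infty$ satisfies the coupled Bellman optimality equation \eqref{eq:11} with $\mu_i^\infty$ attaining the minimum in \eqref{eq:12}. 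By the uniqueness of positive-definite solutions of \eqref{eq:11} guaranteed in Theorem \ref{thm:1}, I conclude $Q_i^\infty = Q_i^\star$ and $\mu_i^\infty=\mu_i^\star$ for every $i\in\mathcal{P}$.
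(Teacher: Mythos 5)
Your overall skeleton (induction on $p$ with the one-step invariant, telescoping of \eqref{eq:17} for the base case, and monotone convergence plus a passage to the limit for Part 2) matches the paper's proof in Appendix I, and those portions are sound (modulo the fact that Dini's theorem also requires continuity of the limit function, which you have not established; the paper only needs pointwise convergence). The genuine gap is in the inductive step, at exactly the point you identify as the main difficulty. After the chain $Q_i^p \geq l_i + \gamma Q_i^{p,\mu_i^{p-1},\mu_{-i}^{p-1}}(x_1,r_1) \geq l_i + \gamma Q_i^{p,\mu_i^{p},\mu_{-i}^{p-1}}(x_1,r_1)$, what remains is to show $Q_i^{p,\mu_i^p,\mu_{-i}^{p-1}}(x,r) \geq Q_i^{p,\mu_i^p,\mu_{-i}^{p}}(x,r)$, i.e., that the simultaneous switch of the \emph{other} players' policies does not increase player $i$'s value. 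Your plan is to dominate this off-agent perturbation by the ``on-agent monotone decrease.'' This cannot work in general: the on-agent slack $Q_i^{p,\mu_i^{p-1},\mu_{-i}^{p-1}} - Q_i^{p,\mu_i^{p},\mu_{-i}^{p-1}}$ is governed by $\|\mu_i^p-\mu_i^{p-1}\|$, while the off-agent perturbation is governed by $\|\mu_j^p-\mu_j^{p-1}\|$ for $j\neq i$; these are independent quantities, and the former can vanish while the latter does not (player $i$'s argmin may be unchanged while player $j$'s moves substantially). Strong convexity of player $j$'s argmin bounds $\|\Delta\mu_j^p\|$ by the change in $Q_j$, not in $Q_i$, so there is no guaranteed source of slack to absorb the error and the induction does not close.

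The paper handles this step by a different mechanism (see \eqref{eq:34}--\eqref{eq:38}): in the decoupled case $g_j=0$ for $j\neq i$ the trajectory is unaffected by $a_{-i}$, so the perturbation equals exactly $E_1=\sum_{j\neq i}\big[\Delta\mu_j^TR_{ij}\Delta\mu_j - 2\Delta\mu_j^TR_{ij}\mu_j^p\big]$ coming only from the cross-cost terms $\mu_j^TR_{ij}\mu_j$, and one shows $E_1\geq 0$ outright rather than dominating it by another decrease. The essential structural fact --- which your proposal never invokes --- is that the improved policy has the closed form $\mu_j^p=-\tfrac{1}{2}R_{jj}^{-1}g_j^T(x)E_2$, so the potentially negative cross term $-2\Delta\mu_j^TR_{ij}\mu_j^p=\Delta\mu_j^TR_{ij}R_{jj}^{-1}g_j^T(x)E_2$ carries the factor $\overline{\sigma}(R_{ij}R_{jj}^{-1})\,\|g_j(x)\|_2$ and is therefore dominated by the positive quadratic $\Delta\mu_j^TR_{ij}\Delta\mu_j$ under the smallness hypotheses; continuity then extends the conclusion to sufficiently small nonzero $\|g_j\|_2$. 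Without exploiting this closed form, the hypotheses on $\|g_j(x)\|_2$ and $\overline{\sigma}(R_{ij}R_{jj}^{-1})$ give your Lipschitz-propagation estimate nothing to act on. To repair the proof you should replace the domination argument with the explicit computation of the perturbation term and the closed-form expression for the minimizers.
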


\begin{proof}
See Appendix I.
\end{proof}
\begin{crl}
For all $i\in \mathcal{P}$, consider $\tilde{Q}_{i}(x,r,a_{i},a_{-i})$ and the set of control policies $\{\tilde{\mu}_{i}(x,r)\}_{i=1}^{N}$ which are computed based on \eqref{eq:15}. Let $h_1\geq h_2\geq 1$ be two horizon lengths, leading to policy evaluation \eqref{eq:16} given by
\begin{equation*}
\begin{aligned}
&Q_{i,H}(x,r,a_{i},a_{-i}) = l_{i}(x,r,a_{i},a_{-i})\nonumber\\
&+\sum_{m=1}^{H-1}\gamma^{m}l_{i}\big(x_m,r_m,\tilde{\mu}_{i}(x_m,r_m),\tilde{\mu}_{-i}(x_m,r_m)\big) \nonumber \\
&+\gamma^{H} \tilde{Q}_{i}\big(x_{H},r_{H},\tilde{\mu}_{i}(x_{H},r_{H}),\tilde{\mu}_{-i}(x_{H},r_{H})\big),
\end{aligned}
\end{equation*}
with $H\in \{h_1,h_2\}$. Then it holds that $Q_{i,h_1}(x,r,a_{i},a_{-i})\leq Q_{i,h_2}(x,r,a_{i},a_{-i})$ for all $(x,r,a_i,a_{-i})\in \mathcal{Z}_i$.
\end{crl}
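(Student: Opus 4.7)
The plan is to view $\tilde{Q}_i$ and $\tilde{\mu}_i$ as an iterate $Q^{p}_{i}$ together with its updated policy $\mu^{p}_{i}$ produced by Algorithm 1 (the corollary naturally sits inside a run of MSQVI), extract from Theorem 2 a pointwise one-step dissipation inequality along any trajectory driven by $\tilde{\mu}$, and then telescope the expressions $Q_{i,h_1}$ and $Q_{i,h_2}$ term-by-term. Concretely, specializing the second inequality of \eqref{eq:18} to $a_i=\tilde{\mu}_i(x,r)$ and $a_{-i}=\tilde{\mu}_{-i}(x,r)$, and using the compact notation \eqref{eq:13}, one obtains that for every $(x,r)\in\mathcal{X}^{2}$
\[
\tilde{Q}_i^{\tilde{\mu}_i,\tilde{\mu}_{-i}}(x,r) \;\geq\; l_i^{\tilde{\mu}_i,\tilde{\mu}_{-i}}(x,r) + \gamma\,\tilde{Q}_i^{\tilde{\mu}_i,\tilde{\mu}_{-i}}(x^{+},r^{+}),
\]
where $x^{+}=f(x)+\sum_{j=1}^{N}g_j(x)\tilde{\mu}_j(x,r)$ and $r^{+}=h(r)$. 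This single inequality will do all the work.

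Next, I would establish one-step monotonicity in the horizon, namely $Q_{i,H+1}(x,r,a_i,a_{-i})\leq Q_{i,H}(x,r,a_i,a_{-i})$ for every $H\geq 1$. The trick is to expand both $Q_{i,H+1}$ and $Q_{i,H}$ along the \emph{same} trajectory $(x_m,r_m)_{m\geq 0}$, i.e., $x_1=f(x)+\sum_j g_j(x)a_j$, $r_1=h(r)$, with $(x_{m},r_{m})$ generated thereafter under $\tilde{\mu}$. All stage-cost contributions $\gamma^{m} l_i^{\tilde{\mu}_i,\tilde{\mu}_{-i}}(x_m,r_m)$ for $m=0,\ldots,H-1$ are then common to both expressions and cancel in the difference, leaving
\[
Q_{i,H+1}-Q_{i,H} = \gamma^{H}\Big[l_i^{\tilde{\mu}_i,\tilde{\mu}_{-i}}(x_H,r_H) + \gamma\,\tilde{Q}_i^{\tilde{\mu}_i,\tilde{\mu}_{-i}}(x_{H+1},r_{H+1}) - \tilde{Q}_i^{\tilde{\mu}_i,\tilde{\mu}_{-i}}(x_H,r_H)\Big].
\]
The bracket is non-positive by the dissipation inequality applied at $(x_H,r_H)\in\mathcal{X}^{2}$ (the trajectory stays in $\mathcal{X}$ by Assumption 1), so $Q_{i,H+1}\leq Q_{i,H}$.

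Finally, iterating this one-step inequality from $H=h_2$ up to $H=h_1-1$ produces the chain $Q_{i,h_1}\leq Q_{i,h_1-1}\leq\cdots\leq Q_{i,h_2}$, which is the claim. The main obstacle I anticipate is less a computation than a bookkeeping issue: one must ensure that $Q_{i,H+1}$ and $Q_{i,H}$ are expanded along the same trajectory so that the common prefix cancels cleanly. Once that is done, the result reduces to a single application of the dissipation inequality inherited from Theorem 2.
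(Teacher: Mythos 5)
Your proposal is correct and follows essentially the same route as the paper's Appendix II: both rest on the single one-step inequality $\tilde{Q}_i^{\tilde{\mu}_i,\tilde{\mu}_{-i}}(x,r)\geq l_i^{\tilde{\mu}_i,\tilde{\mu}_{-i}}(x,r)+\gamma\,\tilde{Q}_i^{\tilde{\mu}_i,\tilde{\mu}_{-i}}(x^{+},r^{+})$ applied at the tail of the horizon along the common $\tilde{\mu}$-generated trajectory, and then iterate. The only cosmetic difference is that you package the telescoping as a one-step horizon monotonicity $Q_{i,H+1}\leq Q_{i,H}$ chained from $h_2$ to $h_1$, whereas the paper collapses the $h_1$-horizon expression directly down to the $h_2$-horizon one (citing the initialization condition \eqref{eq:17} rather than the iterate-level inequality \eqref{eq:18} as the source of the dissipation step, which is an equivalent justification in this setting).
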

\begin{proof}
See Appendix II.
\end{proof}
\textit{Remark 1:} Similarly to multi-step VI approaches for single-input systems \cite{b39,b40}, to ensure \eqref{eq:17}, it suffices to initialize $Q^{0}_{i}(x,r,a_{i},a_{-i})$ with a sufficiently large, continuously differentiable, positive definite function for all $i$. Furthermore, the conditions on $\|g_j(x)\|_2$ and $\overline{\sigma}(R_{ij}R^{-1}_{jj})$ for $j\in \mathcal{P}\setminus \{i\}$ theoretically ensure the monotonic convergence of MSQVI to NE solutions for the class of weakly coupled games \cite{b4}. Similar conditions are required by PI and VI to ensure monotonicity and convergence \cite{b9,b13,b15,b18,b21,b22,b23,b77}. In contrast to PI and VI, according to Corollary 1, the proposed MSQVI algorithm can potentially enable improved quality of solutions and/or faster convergence also for more general NZSGs by increasing the horizon length $H_p$. Finally we note that, by setting $H_p=1$ for all $p$ in Theorem 2, we can similarly prove the non-increasing monotonic convergence to NE solutions for the case of VI, previously explicitly proved only for discrete-time linear NZSGs and graphical games \cite{b23,b24,b26,b27}.
\section{A Data-Driven Implementation}
In this section, a data-driven implementation approach based on critic-only LS is developed for the proposed MSQVI algorithm. We consider a linearly parameterized function approximation for $Q^{p}_{i}(x,r,a_{i},a_{-i})$ on $\mathcal{Z}_i$ as
\begin{equation}
\label{eq:19}
Q^{p}_{i}(x,r,a_{i},a_{-i}) = [\Phi_i(x,r,a_{i},a_{-i})]^{T} w^{p}_{i}+e^{p}_{i}(x,r,a_{i},a_{-i}) 
\end{equation}  
for all $i$, where $w^{p}_{i} \in \mathbb{R}^{K}$ is the vector of weights, $\Phi_i(x,r,a_{i},a_{-i})\in \mathbb{R}^{K}$ the vector of linearly independent polynomial basis functions and $e^{p}_{i}(x,r,a_{i},a_{-i})\in \mathbb{R}$ the associated error of approximation. Since $\mathcal{Z}_i$ is compact, the Stone-Weierstrass Theorem \cite{b78,b79,b80} ensures that $w_i$ can be selected so that $\lim_{K\to \infty} e^{p}_{i}(x,r,a_{i},a_{-i}) = 0$.  For implementation purposes, due to the fact that $w^{p}_{i}$ is unknown, we define 
\begin{equation*}
\hat{Q}^{p}_{i}(x,r,a_{i},a_{-i}) = [\Phi_i(x,r,a_{i},a_{-i})]^{T}\hat{w}^{p}_{i}
\end{equation*}
for all $i$, where $\hat{w}^{p}_{i}\in \mathbb{R}^{K}$ is an estimation of $w^{p}_{i}$. Then, the policy improvement \eqref{eq:15} becomes
\begin{equation*}
\hat{\mu}^{p}_{i}(x,r) = \underset{u_{i}}{\mathrm{argmin}}\enspace \hat{Q}^{p}_{i}\big(x,r,u_{i},\hat{\mu}^{p-1}_{-i}(x,r)\big).
\end{equation*}
Furthermore, the policy evaluation scheme \eqref{eq:16} is now given by
\begin{equation}
\begin{aligned}
\label{eq:20}
&\epsilon^{p+1}_{i}(x,r,a_{i},a_{-i}) = \hat{Q}^{p+1}_{i}(x,r,a_{i},a_{-i})-l_{i}(x,r,a_{i},a_{-i}) \\
&- \sum_{m=1}^{H_{p}-1}\gamma^{m}l_{i}\big(x_m,r_m,\hat{\mu}^{p}_{i}(x_m,r_m),\hat{\mu}^{p}_{-i}(x_m,r_m)\big) \\
&-\gamma^{H_{p}}\hat{Q}^{p}_{i}\big(x_{H_p},r_{H_p},\hat{\mu}^{p}_{i}(x_{H_p},r_{H_p}),\hat{\mu}^{p}_{-i}(x_{H_p},r_{H_p})\big) \\
&= [\Phi_{i}(x,r,a_{i},a_{-i})]^{T}\hat{w}^{p+1}_{i} -l_{i}(x,r,a_{i},a_{-i}) \\
&- \sum_{m=1}^{H_{p}-1}\gamma^{m}l_{i}\big(x_{m},r_m,\hat{\mu}^{p}_i(x_m,r_m),\hat{\mu}^{p}_{-i}(x_m,r_m)\big)\\
&- \gamma^{H_p}[\Phi_{i}\big(x_{H_p},r_{H_p},\hat{\mu}^{p}_{i}(x_{H_p},r_{H_p}),\hat{\mu}^{p}_{-i}(x_{H_p},r_{H_p})\big)]^{T}\hat{w}^{p}_{i},
\end{aligned}
\end{equation}
where $\epsilon^{p+1}_{i}(x,r,a_{i},a_{-i})$ is the residual error due to the approximation errors $e^{p+1}_i$ on $\hat{Q}^{p+1}_{i}$ and $e^{p}_i$ on $\hat{Q}^{p}_{i}$ . Based on \eqref{eq:20}, the unknown vector $\hat{w}^{p+1}_{i}$ is computed by collecting system data. For all players $i\in \mathcal{P}$ and $p\geq 0$, let
\begin{equation}
\begin{aligned}
\label{eq:21}
S^{p}_{i} =& \big\{x_b,r_b,a_{ib},a_{-ib},[x_{m,b},r_{m,b},\hat{\mu}^{p}_i(x_{m,b},r_{m,b}), \\
&\hat{\mu}^{p}_{-i}(x_{m,b},r_{m,b})]_{m=1}^{H_p}\big\}_{b=1}^{B}
\end{aligned}
\end{equation}
be a buffer constructed from the data of the game round $p$, where $x_{1,b} = f(x_{b})+\sum_{j=1}^{N} g_{j}(x_{b})a_{jb}$, $r_{1,b} = h(r_b)$,  $x_{m,b} = f(x_{m-1,b})+\sum_{j=1}^{N} g_{j}(x_{m-1,b})\hat{\mu}^{p}_{j}(x_{m-1,b},r_{m-1,b})$, $r_{m,b} = h(r_{m-1,b})$ for $m>1$, while $B\in \mathbb{N}$ is the size of the buffer. The residual error is then given by
\begin{equation}
\begin{aligned}
\label{eq:22}
&\epsilon^{p+1}_{i,b}(x_b,r_b,a_{ib},a_{-ib}) = [\Phi_{i}(x_b,r_b,a_{ib},a_{-ib})]^{T}\hat{w}^{p+1}_{i} - \gamma^{H_p}\cdot \\
&\big[\Phi_{i}\big(x_{H_p,b},r_{H_p,b},\hat{\mu}^{p}_{i}(x_{H_p,b},r_{H_p,b}),\hat{\mu}^{p}_{-i}(x_{H_p,b},r_{H_p,b})\big)\big]^{T}\hat{w}^{p}_{i} \\
&- \sum_{m=1}^{H_{p}-1}\gamma^{m}l_{i}\big(x_{m,b},r_{m,b},\hat{\mu}^{p}_{i}(x_{m,b},r_{m,b}),\hat{\mu}^{p}_{-i}(x_{m,b},r_{m,b})\big) \\
&-l_{i}(x_b,r_b,a_{ib},a_{-ib})
\end{aligned}
\end{equation}
for $b=1,\ldots,B$. The unknown vector $\hat{w}^{p+1}_{i}$ can then be computed by minimizing the sum of residual errors, that is
\begin{equation}
\label{eq:23}
\min \sum_{b=1}^{B} (\epsilon^{p+1}_{i,b})^{2}.
\end{equation}
Then, the least squares scheme is implemented as follows
\begin{equation}
\label{eq:24}
\hat{w}^{p+1}_{i} = [\Psi^{T}_{i}\Psi_{i}]^{-1}\Psi^{T}_{i} z^{p}_{i}
\end{equation}
for all $i$, where $z^{p}_{i} = \begin{bmatrix} z^{p}_{i,1} & \cdots & z^{p}_{i,B}\end{bmatrix}^{T}$, $\Psi_{i} = \begin{bmatrix} \Psi_{i,1} & \cdots & \Psi_{i,B}\end{bmatrix}^{T}$, $\Psi_{i,b} = \Phi_{i}(x_{b},r_b,a_{ib},a_{-ib})$, $z^{p}_{i,b} =\gamma^{H_p}[\Phi_{i}\big(x_{H_p,b},r_{H_p,b},\hat{\mu}^{p}_{i}(x_{H_p,b},r_{H_p,b}),\hat{\mu}^{p}_{-i}(x_{H_p,b},r_{H_p,b})\big)]^{T}\cdot$\\$\hat{w}^{p}_{i}+ \sum_{m=1}^{H_{p}-1}\gamma^{m}l_{i}\big(x_{m,b},r_{m,b},\hat{\mu}^{p}_{i}(x_{m,b},r_{m,b}),\hat{\mu}^{p}_{-i}(x_{m,b},r_{m,b})\big)$\\$+l_{i}(x_b,r_b,a_{ib},a_{-ib})$. Algorithm 2 shows the data-driven LS implementation of the MSQVI algorithm (Algorithm 1), which we refer to as MSQVI-LS algorithm.
\begin{algorithm}
        \caption{The proposed MSQVI-LS algorithm.}\label{alg2}
        \begin{algorithmic}[1]
		 \State \textbf{Initialization:} Define $\{\hat{Q}_{i}^{0}\}_{i=1}^{N}=\{Q_{i}^{0}\}_{i=1}^{N}$ based on Remark 1 and arbitrary $\{\hat{\mu}_{i}^{-1}(x,r)\}_{i=1}^{N}=\{\mu_{i}^{-1}(x,r)\}_{i=1}^{N}$. Set $p=0$, $\tau\geq 0$ and $B\in \mathbb{N}$.
               \State \textbf{Policy Improvement:}\newline $\hat{\mu}^{p}_{i}(x,r)=\underset{u_{i}}{\mathrm{argmin}}\enspace \hat{Q}^{p}_{i}\big(x,r,u_{i},\hat{\mu}^{p-1}_{i}(x,r)\big)$, for all $i$.
             \State \textbf{Choice of horizon length:} Select $H_p\geq 1$.
             \State \textbf{Data collection:} Construct data buffer $S^{p}_{i}$ \eqref{eq:21}, for all $i$.
             \State \textbf{Policy Evaluation:} Solve \eqref{eq:24} for $\hat{w}^{p+1}_{i}$, for all $i$.
\State \textbf{Termination of Learning Phase:}\newline
If $\underset{b}{\mathrm{max}} |\hat{Q}^{p+1}_{i}(x_b,r_b,a_{ib},a_{-ib})-\hat{Q}^{p}_{i}(x_b,r_b,a_{ib},a_{-ib})|> \tau$ for $i\in \mathcal{P}$, set $p=p+1$ and go to Step $2$. Otherwise, set $\{\hat{Q}^{\star}_{i}(x,r,a_i,a_{-i})\}_{i=1}^{N}=\{\hat{Q}^{p+1}_{i}(x,r,a_i,a_{-i})\}_{i=1}^{N}$ and return $\{\hat{\mu}^{\star}_i(x,r)\}_{i=1}^{N}= \{\hat{\mu}^{p+1}_i(x,r)\}_{i=1}^{N}$ as the set of approximate optimal control policies.
        \end{algorithmic}
    \end{algorithm}
\begin{theorem}
Let Assumptions 1 and 2 hold. For $i\in \mathcal{P}$, consider the sequences $\{\hat{Q}^{p}_{i}(x,r,a_{i},a_{-i})\}_{p\in \mathbb{N}}$ and $\{\hat{\mu}^{p}_{i}(x,r)\}_{p\in \mathbb{N}_0}$ generated by Algorithm $2$. Assume that there exist constants $\bar{B}>0$ and $\delta >0$ such that for all $B\geq \bar{B}$
\begin{equation}
\label{eq:25}
\frac{1}{B} \sum_{b=1}^{B} \Psi_{i,b}\Psi^{T}_{i,b} \geq \delta I_{B}.
\end{equation}
Then, $\lim_{p,K\to \infty} \hat{Q}^{p}_{i}(x,r,a_{i},a_{-i}) = Q^{\star}_{i}(x,r,a_{i},a_{-i})$ and $\lim_{p,K\to \infty} \hat{\mu}^{p}_{i}(x,r) = \mu^{\star}_{i}(x,r) $, which satisfy \eqref{eq:11} for all $i$.
\end{theorem}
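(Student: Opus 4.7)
The plan is to decouple the parametric approximation error (controlled by the basis size $K$) from the dynamic-programming iteration error (controlled by $p$), and then combine the two convergences via a uniform-in-$p$ approximation estimate so that the joint limit as $p,K\to\infty$ can be carried out. The very first step is to verify well-posedness of the LS update: the persistence-of-excitation condition \eqref{eq:25}, together with the boundedness of $\Psi_{i,b}$ on the compact set $\mathcal{Z}_i$, guarantees that $\Psi_i^{T}\Psi_i$ is invertible with $\underline{\sigma}(\Psi_i^{T}\Psi_i)\geq B\delta$, so that the normal-equation solution \eqref{eq:24} is the unique minimizer of \eqref{eq:23} for every $i$ and every $p$.

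Next, I would substitute the exact linear parameterization \eqref{eq:19} into the idealized MSQVI policy-evaluation equation \eqref{eq:16}. Subtracting this identity from \eqref{eq:20} evaluated along the $B$ sample trajectories expresses the LS iterate as
\begin{equation*}
\hat{w}^{p+1}_i - w^{p+1}_i = [\Psi_i^{T}\Psi_i]^{-1}\Psi_i^{T}\,\eta^{p}_i,
\end{equation*}
where the entries of $\eta^{p}_i$ are discounted differences of the approximation residuals $e^{p}_i$ (on the bootstrapped target) and $e^{p+1}_i$ (on the current Q-function). By Assumption \ref{ass:2} and compactness of $\mathcal{Z}_i$, the Stone-Weierstrass argument invoked after \eqref{eq:19} guarantees that, for every fixed $p$, both $\|e^{p}_i\|_\infty$ and $\|e^{p+1}_i\|_\infty$ can be made arbitrarily small by choosing $K$ large enough. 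Combined with the bound on $\underline{\sigma}(\Psi_i^{T}\Psi_i)$ and the boundedness of the basis $\Phi_i$, this yields $\|\hat{Q}^{p+1}_i-Q^{p+1}_i\|_\infty\to 0$ as $K\to\infty$ for every fixed $p$, where $Q^{p+1}_i$ denotes the exact MSQVI iterate of Algorithm 1.

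Third, I would invoke Theorem 2 to conclude that $Q^{p}_i\to Q^{\star}_i$ uniformly on $\mathcal{Z}_i$ as $p\to\infty$, and hence $\hat{Q}^{p}_i\to Q^{\star}_i$ in the joint limit $(p,K)\to\infty$. Convergence $\hat{\mu}^{p}_i\to\mu^{\star}_i$ then follows from continuity of the argmin operator at the limit, which in turn holds because the minimizer in \eqref{eq:9} is unique: $l_i$ is strictly convex in $u_i$ thanks to $R_{ii}\in\mathbb{S}^{n}_{++}$ appearing quadratically in \eqref{eq:4}, and this convexity is preserved in $Q^{\star}_i$ by the Bellman recursion. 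Passing to the limit in \eqref{eq:20} finally shows that $(Q^{\star}_i,\mu^{\star}_i)_{i\in\mathcal{P}}$ satisfies \eqref{eq:11}, which completes the statement.

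The main obstacle will be the bootstrapped nature of \eqref{eq:20}: the target at iteration $p+1$ is built from $\hat{Q}^{p}_i$, so naive bounds allow approximation errors to compound with $p$. To control this, I would derive a uniform-in-$p$ bound on $\|\hat{w}^{p}_i\|_2$ from the $p$-independent lower bound in \eqref{eq:25} and from the boundedness of the stage cost over $\mathcal{Z}_i$ (implied by Assumption \ref{ass:1} and the compactness of the sets), and I would use the monotone boundedness of $\{Q^{p}_i\}$ from Theorem 2 to obtain a uniform-in-$p$ modulus of approximability in the Stone-Weierstrass step. The order of limits must then be taken as $K\to\infty$ first (for each fixed $p$, recovering the exact iterate $Q^{p}_i$) and $p\to\infty$ next, with the uniform-in-$p$ rate ensuring that the joint limit exists and coincides with $Q^{\star}_i$.
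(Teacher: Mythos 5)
Your overall architecture is the same as the paper's: use the persistence-of-excitation condition \eqref{eq:25} to make the LS step well-posed, use Stone--Weierstrass to drive the parametric residuals to zero as $K\to\infty$ for each fixed $p$, recover the exact Algorithm~1 iterate in that limit by induction over $p$, and then invoke Theorem~2 for the limit in $p$. The one place where your write-up is too coarse is the subtraction step: \eqref{eq:16} generates its multi-step costs and trajectories under the \emph{exact} policies $\mu^{p}_{j}$, while \eqref{eq:20} uses the \emph{approximate} policies $\hat{\mu}^{p}_{j}$, so the difference $\hat{w}^{p+1}_{i}-w^{p+1}_{i}$ is not driven only by the residuals $e^{p}_{i},e^{p+1}_{i}$ --- it also contains the mismatch between the stage costs and bootstrap targets evaluated along two different closed-loop trajectories. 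The paper resolves this by introducing an auxiliary sequence $\bar{Q}^{p}_{i}$ in \eqref{eq:41} that runs the exact recursion \emph{with the approximate policies} $\hat{\mu}^{p}_{j}$: comparing $\hat{Q}^{p+1}_{i}$ with $\bar{Q}^{p+1}_{i}$ isolates the purely parametric LS error (handled via \eqref{eq:25} and the optimality of the LS minimizer), and a second induction shows $\bar{Q}^{p}_{i}\to Q^{p}_{i}$ because $\hat{\mu}^{p}_{i}\to\mu^{p}_{i}$ as $K\to\infty$. Your closing paragraph shows you are aware of the compounding issue and propose the right remedy (induction over $p$ with $K\to\infty$ taken first), so this is a repairable imprecision rather than a wrong approach; your additional argument for policy convergence via strict convexity of $Q^{\star}_{i}$ in $u_{i}$ is in fact more explicit than the paper, which simply asserts that convergence of the Q-functions implies convergence of the minimizers.
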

\begin{proof}
See Appendix III.
\end{proof}
\textit{Remark 2:} Based on Theorem 3, to guarantee convergence of the set of weight vectors $\{\hat{w}^{p}_i\}_{i=1}^{N}$, the persistence of excitaton (PoE) condition \eqref{eq:25} \cite{b81} is required to hold for all $i\in \mathcal{P}$. This condition ensures the existence of the inverse of the matrix $\Psi^{T}_{i}\Psi_{i}$ in \eqref{eq:24} for all $i$. For practical implementation,  to ensure \eqref{eq:25}, we can apply randomized policies $\{a_{ik}\}_{i=1}^{N}$ to the system \eqref{eq:1} (e.g., randomized experience replay \cite{b40,b45}) or employ general off-policy learning methods \cite{b82,b83}, where $\{a_{ik}\}_{i=1}^{N}$ acts as a set of appropriate exploration policies applied to the system \eqref{eq:1} and differ from the set of evaluated control policies $\{\hat{\mu}^{p}_i(x,r)\}_{i=1}^{N}$. The richness of $\{a_{ik}\}_{i=1}^{N}$ and particular choice of buffer size $B$ is generally dependent on the complexity of the considered system \eqref{eq:1}. 
\section{A Linear Programming Approach For NZSGs}
In this section, we proceed with the transformation of the policy evaluation scheme \eqref{eq:16} into a tractable data-driven optimization problem. We define $\mathcal{F}(\mathcal{Z}_i)$ as a vector space of bounded (in a suitably weighted norm), real-valued, Borel-measurable functions on $\mathcal{Z}_i$ \cite{b43,b48}. We now proceed by introducing the following functional operator.
\begin{defi}
For $i\in \mathcal{P}$ and a given set of control policies $\{\mu_{-i}(x,r)\}$, the coupled Bellman operator is the mapping $\mathcal{C}_i:\mathcal{F}(\mathcal{Z}_i)\rightarrow \mathcal{F}(\mathcal{Z}_i)$ defined as
\begin{equation}
\begin{aligned}
\label{eq:26}
&\mathcal{C}_iQ_{i}(x_k,r_k,a_{ik},a_{-ik}) = l_{i}(x_k,r_k,a_{ik},a_{-ik}) \\
&+\gamma \min_{u_{i}} Q_{i}\big(x_{k+1},r_{k+1},u_{i},\mu_{-i}(x_{k+1},r_{k+1})\big).
\end{aligned}
\end{equation}
\end{defi}
Please note that dependence of $\mathcal{C}_i$ on $\{\mu_{-i}(x,r)\}$ is suppressed to simplify notation. The operator $\mathcal{C}_i$ not only retains the same structure as the standard Bellman operator, but also inherits its monotone contraction properties, as shown in the following proposition. 
\begin{proposition}
For all $i\in \mathcal{P}$ and $\{\mu_{-i}(x,r)\}$, the coupled Bellman operator $\mathcal{C}_i$ is a monotone contraction mapping with a unique fixed point in $\mathcal{F}(\mathcal{Z}_i)$.
\end{proposition}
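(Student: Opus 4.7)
The plan is to verify the three defining properties in sequence — monotonicity, contractivity, and existence/uniqueness of a fixed point — and then close via the Banach fixed point theorem on $\mathcal{F}(\mathcal{Z}_i)$ equipped with its weighted sup-norm. Along the way I would check that $\mathcal{C}_i$ is a self-map, which requires boundedness and Borel-measurability of $\mathcal{C}_i Q_i$ whenever $Q_i \in \mathcal{F}(\mathcal{Z}_i)$; this follows from Assumption 1 (compactness of $\mathcal{Z}_i$), continuity of $l_i$, and the fact that the pointwise minimum of a bounded Borel function over a compact control set remains Borel and bounded.

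For monotonicity, I would fix $Q_i, Q'_i \in \mathcal{F}(\mathcal{Z}_i)$ with $Q_i \leq Q'_i$ pointwise. Since the successor variables $x_{k+1}$ and $r_{k+1}$ in \eqref{eq:26} are determined by $(x_k, r_k, a_{ik}, a_{-ik})$ alone, for every $u_i \in \mathcal{U}_i$
\begin{equation*}
Q_i(x_{k+1},r_{k+1},u_i,\mu_{-i}(x_{k+1},r_{k+1})) \leq Q'_i(x_{k+1},r_{k+1},u_i,\mu_{-i}(x_{k+1},r_{k+1})).
\end{equation*}
Taking $\min_{u_i}$ preserves this, and multiplying by $\gamma>0$ and adding the common term $l_i(x_k,r_k,a_{ik},a_{-ik})$ yields $\mathcal{C}_i Q_i \leq \mathcal{C}_i Q'_i$ pointwise. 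For the contraction estimate, the central ingredient is the elementary inequality $|\min_a f(a) - \min_a g(a)| \leq \sup_a |f(a) - g(a)|$, applied to $f(u_i) = Q_i(x_{k+1},r_{k+1},u_i,\mu_{-i}(x_{k+1},r_{k+1}))$ and the analogous $g$ built from $Q'_i$. The stage-cost terms cancel identically, which gives
\begin{equation*}
|\mathcal{C}_i Q_i - \mathcal{C}_i Q'_i|(x_k,r_k,a_{ik},a_{-ik}) \leq \gamma\, \|Q_i - Q'_i\|,
\end{equation*}
and taking the norm over $\mathcal{Z}_i$ produces $\|\mathcal{C}_i Q_i - \mathcal{C}_i Q'_i\| \leq \gamma \|Q_i - Q'_i\|$. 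For $\gamma \in (0,1)$ this is already a strict contraction in the standard sup-norm; the boundary case $\gamma = 1$, admissible only under asymptotically stable reference dynamics (as noted below \eqref{eq:4}), is handled by the suitably weighted norm under which $\mathcal{F}(\mathcal{Z}_i)$ is defined.

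With monotonicity, contractivity, and the self-map property in hand, existence and uniqueness of a fixed point in $\mathcal{F}(\mathcal{Z}_i)$ follow by a direct application of the Banach fixed point theorem on this complete metric space, completing the argument. The main technical subtlety I expect to manage is the $\gamma = 1$ regime: the weight function must be specified so that $\mathcal{C}_i$ simultaneously stays inside $\mathcal{F}(\mathcal{Z}_i)$ and remains a strict contraction in the weighted norm — typically requiring the weight to dominate the long-run growth of the reference-driven composition. By contrast, the monotonicity argument and the sup-norm contraction estimate for $\gamma < 1$ are routine dynamic-programming manipulations that rely only on the pointwise structure of \eqref{eq:26}.
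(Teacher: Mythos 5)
Your proposal follows essentially the same route as the paper's proof: monotonicity from the fact that the pointwise minimum over $u_i$ preserves the order, the contraction estimate via $|\min_a f(a)-\min_a g(a)|\leq \sup_a|f(a)-g(a)|$ so the stage costs cancel and the factor $\gamma$ survives, and existence/uniqueness of the fixed point from the Banach fixed point theorem on the complete space $\mathcal{F}(\mathcal{Z}_i)$. Your additional attention to the self-map property and to the $\gamma=1$ regime (where the plain sup-norm estimate is not a strict contraction and a suitable weighted norm must carry the argument) is in fact more careful than the paper, which states the $\gamma$-contraction in the max norm but invokes completeness under the weighted sup norm without reconciling the two.
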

\begin{proof}
We use the compact notation \eqref{eq:13} to present the proof. For all $i$, we first consider $Q_{i1}, Q_{i2} \in \mathcal{F}(\mathcal{Z}_i)$ such that
\begin{equation*}
Q^{\mu_{i},\mu_{-i}}_{i1}(x,r) \leq Q^{\mu_{i},\mu_{-i}}_{i2}(x,r) \enspace \forall (i,x,r,\mu_{i},\mu_{-i}).
\end{equation*}
Then, we get
\begin{equation*}
\begin{aligned}
Q^{\mu_{i},\mu_{-i}}_{i1}(x_{k+1},r_{k+1}) \leq & Q^{\mu_{i},\mu_{-i}}_{i2}(x_{k+1},r_{k+1})\\
\min_{u_{i}} Q^{u_{i},\mu_{-i}}_{i1}(x_{k+1},r_{k+1}) \leq & \min_{u_{i}} Q^{u_{i},\mu_{-i}}_{i2}(x_{k+1},r_{k+1}) \\
\mathcal{C}_iQ_{i1}(x_k,r_k,a_{ik},a_{-ik}) \leq & \mathcal{C}_iQ_{i2}(x_k,r_k,a_{ik},a_{-ik})
\end{aligned}
\end{equation*}
for all $(x_k,r_k,a_{ik},a_{-ik})\in \mathcal{Z}_i$. Therefore, the operator $\mathcal{C}_i$ is monotone \cite{b84}.\\
Next, for all $i\in \mathcal{P}$, given $Q_{i1},Q_{i2} \in \mathcal{F}(\mathcal{Z}_i)$, we have that
\begin{equation*}
\begin{aligned}
&\big|\mathcal{C}_iQ_{i1}(x_k,r_k,a_{ik},a_{-ik})-\mathcal{C}_iQ_{i}(x_k,r_k,a_{ik},a_{-ik})\big|= \\
&\gamma \big|\min_{u_{i}} Q^{u_{i},\mu_{-i}}_{i1}(x_{k+1},r_{k+1})-\min_{u_{i}} Q^{u_{i},\mu_{-i}}_{i2}(x_{k+1},r_{k+1})\big|\\
& \leq  \gamma \max_{u_{i}} \big|Q^{u_{i},\mu_{-i}}_{i1}(x_{k+1},r_{k+1}) - Q^{u_{i},\mu_{-i}}_{i2}(x_{k+1},r_{k+1}) \big| \\
& \leq \gamma \max_{x_k,r_k,u_{i},u_{-i}} \big| Q^{u_{i},u_{-i}}_{i1}(x_k,r_k) - Q^{u_{i},u_{-i}}_{i2}(x_k,r_k) \big|\\
&= \gamma \max_{x_k,r_k,a_{ik},a_{-ik}} \big| Q_{i1}(x_k,r_k, a_{ik},a_{-ik})- Q_{i2}(x_k,r_k, a_{ik},a_{-ik}) \big|,
\end{aligned}
\end{equation*}
i.e., $\|\mathcal{C}_iQ_{i1}-\mathcal{C}_iQ_{i2}\|_{\infty} \leq \gamma \|Q_{i1}-Q_{i2}\|_{\infty}$ for all $ Q_{i1},Q_{i2} \in \mathcal{F}(\mathcal{Z}_i)$ and $i\in \mathcal{P}$. Hence, $\mathcal{C}_i$ is a $\gamma$-contraction with respect to the max norm \cite[Def. 5.1-1]{b85}. As $\mathcal{F}(\mathcal{Z}_i)$ is complete under the weighted sup norm, the uniqueness of fixed point in \eqref{eq:26} follows \cite[Thm 5.1-2]{b85}.
\end{proof}
Based on Proposition 1, if $Q_{i}\in \mathcal{F}(\mathcal{Z}_i)$ satisfies the coupled Bellman inequality $Q_i \leq \mathcal{C}_iQ_i$ for all $i$, the monotone contraction property of $\mathcal{C}_i$ implies
\begin{equation*}
Q_i \leq \mathcal{C}_iQ_i \leq \ldots \leq \lim_{p\to \infty} \mathcal{C}^{p}_iQ_i =Q^{\star}_i \text{ for all }i,
\end{equation*}
i.e., $Q_i$ is a pointwise lower bound to $Q^{\star}_i$. Due to the minimum operator, $\mathcal{C}_i$ is nonlinear in $Q_i$, although it can be relaxed to the following linear inequality
\begin{equation*}
\begin{aligned}
&Q_i(x_k,r_k,a_{ik},a_{-ik}) \leq l_i(x_k,r_k,a_{ik},a_{-ik}) \\
&+ \gamma Q_i\big(x_{k+1},r_{k+1},u_i,\mu_{-i}(x_{k+1},r_{k+1})\big)
\end{aligned}  
\end{equation*}
for all $(x_k,r_k,a_{ik},a_{-ik},u_i)\in \mathcal{Z}_i\times \mathcal{U}_i$. This relaxation leads to the formulation of the following infinite-dimensional linear program for all $i$
\begin{equation}
\begin{aligned}
\label{eq:27}
\max_{Q_i \in \mathcal{F}(\mathcal{Z}_i)} \quad & \int_{\mathcal{Z}_i}Q_{i}(x,r,a_{i},a_{-i})c_i(dx,dr,da_{i},da_{-i})\\
\textrm{s.t.} \quad \quad & Q_{i}(x_k,r_k,a_{ik},a_{-ik}) \leq l_{i}(x_k,r_k,a_{ik},a_{-ik})\\
\quad & + \gamma Q_{i}\big(x_{k+1},r_{k+1},u_{i},\mu_{-i}(x_{k+1},r_{k+1})\big)\\
\quad & \forall (x_k,r_k,a_{ik},a_{-ik},u_{i})\in \mathcal{Z}_{i}\times \mathcal{U}_{i},
\end{aligned}
\end{equation}
where $c_i$ is a probability measure that allocates positive mass to all open subsets of $\mathcal{Z}_{i}$, for all $i$ \cite{b44,b45,b46,b47,b48}.
\begin{proposition}
For all $i\in \mathcal{P}$, let $Q^{\star}_{i} \in \mathcal{F}(\mathcal{Z}_i)$. Then, the solution to \eqref{eq:8} coincides with a solution to the linear program \eqref{eq:27}, for $c_i$ almost all $(x_k,r_k,a_{ik},a_{-ik})\in \mathcal{Z}_i$.
\end{proposition}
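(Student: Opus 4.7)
The plan is to establish two facts: (i) $Q^{\star}_{i}$ is feasible for the LP \eqref{eq:27}, and (ii) every feasible $Q_{i}$ is pointwise dominated by $Q^{\star}_{i}$, so that $Q^{\star}_{i}$ attains the LP maximum. A $c_i$-a.e. uniqueness argument then yields the coincidence claim. Throughout, the bridge between \eqref{eq:8} and the LP is the observation that $Q^{\star}_{i}$ is precisely the unique fixed point of the coupled Bellman operator $\mathcal{C}_i$ from Definition 2, which follows by directly matching \eqref{eq:8} against \eqref{eq:26} and invoking the uniqueness of the fixed point guaranteed by Proposition 1.

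For feasibility, I would substitute \eqref{eq:8} into the LP constraint: since $\min_{u_{i}'} Q^{\star}_{i}\bigl(x_{k+1},r_{k+1},u_{i}',\mu_{-i}(x_{k+1},r_{k+1})\bigr) \leq Q^{\star}_{i}\bigl(x_{k+1},r_{k+1},u_{i},\mu_{-i}(x_{k+1},r_{k+1})\bigr)$ for every $u_{i}\in \mathcal{U}_i$, equation \eqref{eq:8} immediately implies $Q^{\star}_{i}(x_k,r_k,a_{ik},a_{-ik}) \leq l_{i}(x_k,r_k,a_{ik},a_{-ik}) + \gamma Q^{\star}_{i}\bigl(x_{k+1},r_{k+1},u_{i},\mu_{-i}(x_{k+1},r_{k+1})\bigr)$ uniformly in $u_{i}$, which is exactly the LP constraint. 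Since $Q^{\star}_{i}\in \mathcal{F}(\mathcal{Z}_i)$ by assumption, feasibility is established.

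For optimality, the key reduction is that any feasible $Q_{i}$ satisfies $Q_{i} \leq \mathcal{C}_i Q_{i}$ pointwise on $\mathcal{Z}_i$: taking the minimum over $u_{i}$ on the right-hand side of the LP constraint reconstructs the operator $\mathcal{C}_i$ from \eqref{eq:26}. Applying the monotonicity and $\gamma$-contractivity of $\mathcal{C}_i$ (Proposition 1) iteratively yields $Q_{i}\leq \mathcal{C}_i Q_{i} \leq \mathcal{C}_i^{2}Q_{i} \leq \ldots \leq \lim_{p\to\infty}\mathcal{C}_i^{p}Q_{i} = Q^{\star}_{i}$, the last equality being the Banach fixed-point conclusion. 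Integrating this pointwise inequality against the positive probability measure $c_i$ gives $\int_{\mathcal{Z}_i} Q_{i}\,dc_i \leq \int_{\mathcal{Z}_i} Q^{\star}_{i}\,dc_i$, so $Q^{\star}_{i}$ (being itself feasible) attains the LP supremum.

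To conclude the $c_i$-a.e. coincidence, I would let $Q_{i}^{\mathrm{opt}}$ be any optimal solution of \eqref{eq:27}. Then $Q^{\star}_{i}-Q_{i}^{\mathrm{opt}}\geq 0$ pointwise with $\int_{\mathcal{Z}_i}(Q^{\star}_{i}-Q_{i}^{\mathrm{opt}})\,dc_i = 0$; since $c_i$ assigns positive mass to every open subset of $\mathcal{Z}_i$ and the integrand is Borel-measurable, this forces $Q_{i}^{\mathrm{opt}} = Q^{\star}_{i}$ for $c_i$-almost all $(x,r,a_{i},a_{-i})\in \mathcal{Z}_i$. The main conceptual move, and the only step that requires real thought, is the reduction of the linear-but-$u_{i}$-parameterized LP constraint to the nonlinear Bellman inequality $Q_{i}\leq \mathcal{C}_i Q_{i}$ via the minimization trick; once that reduction is in place, Proposition 1 drives the rest of the argument mechanically, and no step appears to present a serious obstacle.
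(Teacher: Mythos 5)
Your proof is correct and follows essentially the same route the paper intends: the paper omits the proof, deferring to \cite[Prop.~1]{b49}, but the argument there (and in the paragraph preceding Proposition~2) is exactly your two-step scheme --- feasibility of $Q^{\star}_{i}$ via the $\min_{u_i}\le$ relaxation, and pointwise domination $Q_i\le \mathcal{C}_iQ_i\le\ldots\le Q^{\star}_i$ of every feasible point via the monotone contraction of Proposition~1, followed by the zero-integral argument for $c_i$-a.e.\ coincidence. No gaps.
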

The proof is similar to \cite[Prop. 1]{b49} and is omitted. As a direct consequence of Proposition 2, if $Q^{\star}_{i} \in \mathcal{F}(\mathcal{Z}_i)$ and $\{\mu_{-i}\}=\{\mu^{\star}_{-i}\}$ for all $i$, then the solution to \eqref{eq:11}
coincides with the solution to \eqref{eq:27} for $c_i$ almost all $(x_k,r_k,a_{ik},a_{-ik})\in \mathcal{Z}_i$. Furthermore, the set of control policies $\{\mu^{\star}_i\}_{i=1}^{N}$ constitutes a NE solution based on Theorem 1, with $\mu^{\star}_i$ given by \eqref{eq:12} for all $i$. On a further note, the equivalence of solutions requires that there exists a $\tilde{Q}_i \in \mathcal{F}(\mathcal{Z}_i)$ for which $\tilde{Q}_i(x_k,r_k,a_{ik},a_{-ik})\leq \mathcal{C}_i\tilde{Q}_i(x_k,r_k,a_{ik},a_{-ik})$ is satisfied with equality for all $(x_k,r_k,a_{ik},a_{-ik})\in \mathcal{Z}_{i}$ and $i\in \mathcal{P}$. \\
The computation of an optimizer for \eqref{eq:27} is generally intractable \cite{b48}. To tackle this challenge, we employ the critic-only approximation scheme presented in Section IV. We consider again a restricted function space spanned by a finite number of linearly independent polynomial basis functions $\hat{\mathcal{F}}(\mathcal{Z}_i)=\{\hat{Q}_{i}(\cdot,\cdot,\cdot,\cdot)|\hat{Q}_{i}(x,r,a_{i},a_{-i}) =[\Phi_i(x,r,a_{i},a_{-i})]^{T}\hat{w}_{i}\}$ with $\Phi_i \in \mathbb{R}^{K}$ and $\hat{w}_{i}\in \mathbb{R}^{K}$. This approximation approach leads to the associated control policies $\{\hat{\mu}_i\}_{i=1}^{N}$. Then, an approximate solution to \eqref{eq:27} can be computed by solving the following linear program for all $i\in \mathcal{P}$
\begin{equation}
\begin{aligned}
\label{eq:28}
\max_{\hat{Q}_i \in \hat{\mathcal{F}}(\mathcal{Z}_i)} \quad & \int_{\mathcal{Z}_i}\hat{Q}_{i}(x,r,a_{i},a_{-i})c_i(dx,dr,da_{i},da_{-i})\\
\textrm{s.t.} \quad \quad  & \hat{Q}_{i}(x_k,r_k,a_{ik},a_{-ik}) \leq l_{i}(x_k,r_k,a_{ik},a_{-ik})\\
\quad & + \gamma \hat{Q}_{i}\big(x_{k+1},r_{k+1},u_{i},\hat{\mu}_{-i}(x_{k+1},r_{k+1})\big)\\
\quad & \forall (x_k,r_k,a_{ik},a_{-ik},u_{i})\in \mathcal{Z}_{i}\times \mathcal{U}_{i}.
\end{aligned}
\end{equation}
\textit{Remark 3:} We note that the approximation quality of a solution to \eqref{eq:28} in general depends on the choice of $c_i$ for all $i$ \cite{b44,b45,b46,b47,b48}. However, based on Proposition 2, if $Q^{\star}_i \in \hat{\mathcal{F}}(\mathcal{Z}_i)$, then \eqref{eq:28} does retrieve $Q^{\star}_i$, as long as $c_i$ assigns positive mass to all open subsets of $\mathcal{Z}_i$ for all $i$. \\
We then derive a data-driven implementation by collecting system data. Similar to \eqref{eq:21}, a game data buffer $S^{p}_i$ is constructed by all players $i\in \mathcal{P}$ during a game round $p$. Based on the constructed buffer, we can therefore replace the inequality constraints in \eqref{eq:28} with their sampled variants, leading to a tractable finite-dimensional linear program for the policy evaluation of MSQVI \eqref{eq:16}, for all $i$ 
\begin{align}
\label{eq:29}
\max_{\hat{Q}^{p+1}_i \in \hat{\mathcal{F}}(\mathcal{Z}_i)} \quad & \int_{\mathcal{Z}_i}\hat{Q}^{p+1}_{i}(x,r,a_{i},a_{-i})c_i(dx,dr,da_{i},da_{-i})\nonumber\\
\textrm{s.t.} \quad \quad \quad & \hat{Q}^{p+1}_{i}(x_b,r_b,a_{ib},a_{-ib}) \leq l_{i}(x_b,r_b,a_{ib},a_{-ib})\nonumber\\
\quad & +\sum_{m=1}^{H_p-1}\gamma^{m}l^{\hat{\mu}^{p}_{i},\hat{\mu}^{p}_{-i}}_i(x_{m,b},r_{m,b}) \\
\quad & +\gamma^{H_p} \hat{Q}^{p,\hat{\mu}^{p}_{i},\hat{\mu}^{p}_{-i}}_{i}(x_{H_p,b},r_{H_p,b})\nonumber\\
\quad \quad  & \forall b=1,\ldots,B,\nonumber
\end{align}
where we have used the compact notation
\begin{equation*}
\begin{aligned}
l^{\hat{\mu}^{p}_{i},\hat{\mu}^{p}_{-i}}_i(x,r)=& l_i\big(x,r,\hat{\mu}^{p}_{i}(x,r),\hat{\mu}^{p}_{-i}(x,r)\big)\\
\hat{Q}^{p,\hat{\mu}^{p}_{i},\hat{\mu}^{p}_{-i}}_i(x,r)=& \hat{Q}^{p}_i\big(x,r,\hat{\mu}^{p}_{i}(x,r),\hat{\mu}^{p}_{-i}(x,r)\big)
\end{aligned}
\end{equation*}
to simplify presentation. Algorithm 3 shows the proposed LP algorithm, which we refer to as MSQVI-LP. Based on Remark 3, the LP reformulation \eqref{eq:29} inherits all monotonicity and convergence guarantees of the standard MSQVI algorithm (Algorithm 1) presented in Section III.\\
\begin{algorithm}
        \caption{The proposed MSQVI-LP algorithm.}\label{euclid}
        \begin{algorithmic}[1]
		 \State \textbf{Initialization:} Define $\{\hat{Q}_{i}^{0}\}_{i=1}^{N}=\{Q_{i}^{0}\}_{i=1}^{N}$ based on Remark 1 and arbitrary $\{\hat{\mu}_{i}^{-1}(x,r)\}_{i=1}^{N}=\{\mu_{i}^{-1}(x,r)\}_{i=1}^{N}$. Set $p=0$, $\tau\geq 0$ and $B\in \mathbb{N}$.
               \State \textbf{Policy Improvement:}\newline
 $\hat{\mu}^{p}_{i}(x,r)=\underset{u_{i}}{\mathrm{argmin}}\enspace \hat{Q}^{p}_{i}\big(x,r,u_{i},\hat{\mu}^{p-1}_{i}(x,r)\big)$, for all $i$.
             \State \textbf{Choice of horizon length:} Select $H_p\geq 1$.
             \State  \textbf{Data collection:} Construct data buffer $S^{p}_{i}$ \eqref{eq:21}, for all $i$.
             \State \textbf{Policy Evaluation:} Solve optimization problem \eqref{eq:29}, for all $i$.
\State \textbf{Termination of Learning Phase:} \newline
If $\underset{b}{\mathrm{max}} |\hat{Q}^{p+1}_{i}(x_b,r_b,a_{ib},a_{-ib})-\hat{Q}^{p}_{i}(x_b,r_b,a_{ib},a_{-ib})|> \tau$ for $i\in \mathcal{P}$, set $p=p+1$ and go to Step $2$. Otherwise, set $\{\hat{Q}^{\star}_{i}(x,r,a_i,a_{-i})\}_{i=1}^{N}=\{\hat{Q}^{p+1}_{i}(x,r,a_i,a_{-i})\}_{i=1}^{N}$ and return $\{\hat{\mu}^{\star}_i(x,r)\}_{i=1}^{N}= \{\hat{\mu}^{p+1}_i(x,r)\}_{i=1}^{N}$ as the set of approximate optimal control policies.
        \end{algorithmic}
    \end{algorithm}
\textit{Remark 4:} An important advantage of MSQVI-LS and MSQVI-LP (Algorithms 2 and 3 respectively) is that, despite the fact that the introduction of the horizon variable $H_p$ requires the availability of more data, the total number of decision variables, equations and inequality constraints for the solution of \eqref{eq:24} and \eqref{eq:29} only depend on the total size of the game data buffer $B$ and richness of $\hat{\mathcal{F}}(\mathcal{Z}_i)$ for all $i$. In other words, the utilization of $H_p$ does not increase the computational complexity of the derived algorithms.
\section{In Silico Clinical Studies}
In this section, we evaluate the suitability and performance of the proposed MSQVI algorithmic framework on the problem of fully-automated, dual-hormone glucose control of patients diagnosed with T1DM. To accomplish this, the U.S. FDA-accepted DMMS.R simulator (v1.2.1) from the Epsilon Group \cite{b86,b87} has been utilized, which provides a sophisticated simulation environment to test and compare dosing algorithms for personalized, closed-loop DM treatment. The standard adult population provided by the simulator, which consists of 11 virtual subjects, is used to conduct all simulation studies. Furthermore, in the simulator we employ a commercial CGM profile, where sensor readings are provided in $5$-minute measurement intervals as in realistic CGMs \cite{b88}, along with default infusion pump modules. \\
We run 2,000 in silico clinical trials for all virtual subjects. Each in silico trial is characterized by a rich variability profile around a nominal daily meal and exercise scenario. The nominal meal scenario is defined as a set of 6 meals which take place at $[07\text{:}00, 10\text{:}00, 13\text{:}00, 15\text{:}00, 18\text{:}00, 23\text{:}00]$ with carbohydrate (CHO) amounts of $[70, 30, 90, 30, 90, 25]$ grams and a duration of $[30, 15, 45, 15, 45, 20]$ minutes respectively. The nominal exercise scenario starts at $16\text{:}00$, has a moderate intensity and a duration of 30 minutes. The range of the applied variability profile is given as follows: $1)$ $[-60, 60]$ minutes on the meal time, $2)$ $[-40\%, 40\%]$ on the CHO amount, $3)$ $[-50\%, 50\%]$ on meal duration, $4)$ $[-60, 60]$ minutes on exercise time, $5)$ random choice of [light, moderate, intense] as exercise intensity, and $6)$ $[-50\%, 50\%]$ on exercise duration. The introduced variability follows uniform distributions. The resulting randomized meal and exercise scenarios are significantly more challenging compared to real clinical trials reported in the literature \cite{b89,b90}.

\subsection{Game formulation and algorithmic implementation}
We now proceed by formulating the problem as a discrete-time, two-player NZSG. Let $x_k =\begin{bmatrix}x_{1,k}, & x_{2,k}\end{bmatrix}^{T}$ be the state vector, where $x_{1,k}$ is a glucose measurement received by the CGM at time $k$ [mg/dL] and $x_{2,k}$ is the rate of change in blood glucose computed in $30$-minute measurement intervals, i.e., $x_{2,k}=(x_{1,k}-x_{1,k-6})/30$ [mg/dL/min]. We define the glucose reference setpoint $r_k = r^{\star}=120$ mg/dL for all $k$, which is a reliable glycaemic target for effective DM management \cite{b91,b92}. The two players of the NZSG are the control policies $\mu_1(x,r)$ and $\mu_2(x,r)$ associated with the amounts of rapid-acting insulin [U/5mins] and glucagon [mg/5mins] to be administered to the patient respectively. For the stage cost function \eqref{eq:4}, by defining $\mathcal{S}_{11}=1$, $\mathcal{S}_{22}=10^{-3}$, $R_{11}=100$, $R_{12}=R_{21}=100$ and $R_{22}=300$, we get $l_i\big(x_k,r_k,\mu_i(x_k,r_k),\mu_{-i}(x_k,r_k)\big)= (x_{1,k}-r_k)^{T}\mathcal{S}_{ii}(x_{1,k}-r_k)+ \sum_{j=1}^{2} \mu_{j}^{T}(x_k,r_k)R_{ij}\mu_{j}(x_k,r_k)$ for $i\in \{1,2\}$.
The discount factor is set to $\gamma=0.95$. During each in silico trial, $x_{1,0}$ is randomly initialized in the range $[70, 180]$ mg/dL based on a uniform distribution, while the quantity $x_{1,k-6}$ in the definition of $x_{2,k}$ is initially set to 0 until there are available measurements to utilize.   
\definecolor{Gray}{rgb}{0.498,0.498,0.498}
\begin{table*}[t]
\centering
\caption{GLYCAEMIC CONTROL RESULTS OF THE CONDUCTED IN SILICO CLINICAL TRIALS (LEARNING PHASE).}
\definecolor{Silver}{rgb}{0.749,0.749,0.749}
\definecolor{Black}{rgb}{0,0,0}
\scalebox{0.85}{\begin{tblr}{
  cells = {c},
  hlines,
  vlines = {Silver},
  vline{2} = {-}{Black},
}
\textbf{~}                                     & {\textbf{BG}\\\textbf{mean}\\\textbf{[mg/dL]}} & {\textbf{BG}\\\textbf{min}\\\textbf{[mg/dL]}} & {\textbf{BG}\\\textbf{max}\\\textbf{[mg/dL]}} & {\textbf{\% in}\\\textbf{target}\\\textbf{range}} & {\textbf{\% in}\\\textbf{mild}\\\textbf{hypo}} & {\textbf{\% in}\\\textbf{severe}\\\textbf{hypo}} & {\textbf{\% in}\\\textbf{mild }\\\textbf{hyper}} & {\textbf{\% in}\\\textbf{severe}\\\textbf{hyper}} & \textbf{LBGI} & \textbf{HBGI} & {\textbf{TDI}\\\textbf{[U/day]}} & {\textbf{TDG}\\\textbf{[mg/day]}} & {\textbf{iterations}\\\textbf{till}\\\textbf{convergence}} \\
{\textbf{MSQVI}\\\textbf{(LS/LP)}\\\textbf{~}} & 152±9                                          & 85±10                                         & 194±22                                        & 87.2±6.6                                          & 0.6±0.5                                        & 0±0                                              & 12.2±6.1                                         & 0±0                                               & 0.29±0.15     & 1.52±0.55     & 46.6±13.2                        & 0.58±0.25                         & 90±8                                                       \\
{\textbf{VI}\\\textbf{(LS/LP)}}                & 162±17                                         & 64±12                                         & 241±36                                        & 72.6±7.9                                          & 2.5±0.4                                        & 1.3±1.2                                          & 21.6±4.4                                         & 2.0±1.9                                           & 0.96±0.37     & 3.67±0.76     & 42.8±15.8                        & 0.41±0.21                         & 240±20                                                     
\end{tblr}}
\end{table*} 

\begin{table*}[t]
\centering
\caption{GLYCAEMIC CONTROL RESULTS OF THE CONDUCTED IN SILICO CLINICAL TRIALS,\\ USING THE CONVERGED PERSONALIZED SET OF INSULIN AND GLUCAGON CONTROLLERS.}
\definecolor{Silver}{rgb}{0.749,0.749,0.749}
\definecolor{Black}{rgb}{0,0,0}
\scalebox{0.85}{\begin{tblr}{
  cells = {c},
  hlines,
  vlines = {Silver},
  vline{2} = {-}{Black},
}
\textbf{~}                                     & {\textbf{BG}\\\textbf{mean}\\\textbf{[mg/dL]}} & {\textbf{BG}\\\textbf{min}\\\textbf{[mg/dL]}} & {\textbf{BG}\\\textbf{max}\\\textbf{[mg/dL]}} & {\textbf{\% in}\\\textbf{target}\\\textbf{range}} & {\textbf{\% in}\\\textbf{mild}\\\textbf{hypo}} & {\textbf{\% in}\\\textbf{severe}\\\textbf{hypo}} & {\textbf{\% in}\\\textbf{mild}\\\textbf{~hyper}} & {\textbf{\% in}\\\textbf{severe}\\\textbf{hyper}} & \textbf{LBGI} & \textbf{HBGI} & {\textbf{TDI}\\\textbf{[U/day]}} & {\textbf{TDG}\\\textbf{[mg/day]}} \\
{\textbf{MSQVI}\\\textbf{(LS/LP)}\\\textbf{~}} & 139±8                                          & 96±9                                          & 184±19                                        & 93.1±4.4                                          & 0±0                                            & 0±0                                              & 6.9±4.4                                          & 0±0                                               & 0.01±0.02     & 0.71±0.61     & 50.8±11.2                        & 0.51±0.21                         \\
{\textbf{VI}\\\textbf{(LS/LP)}}                & 152±15                                         & 84±12                                         & 229±28                                        & 80.8±5.3                                          & 0.2±0.3                                        & 0±0                                              & 17.5±4.5                                         & 1.5±0.5                                           & 0.06±0.08     & 2.85±1.35     & 45.9±12.4                        & 0.35±0.19                         
\end{tblr}}
\end{table*} 
The Q-function $\hat{Q}^{p}_{i}(x,r,a_{i},a_{-i})$ is defined as the sum of unique elements derived from the polynomial basis function $X^{T}_i\hat{W}^{p}_iX_i$ for all $i\in \{1,2\}$ and $p\in \mathbb{N}_0$. Here, $X_i =\begin{bmatrix} x_1 & x_2 & x^{2}_1 & x^{2}_{2} & r & r^{2} & a_i & a_{-i} \end{bmatrix}^{T} \in \mathbb{R}^{8}$ and $\hat{W}^{p}_i\in \mathbb{S}^{8\times 8}$ is a symmetrix matrix of the unknown weights. This results in a space $\hat{F}(\mathcal{Z}_i)$ spanned by $K=36$ polynomial basis functions, i.e., $\hat{Q}^{p}_{i}(x,r,a_{i},a_{-i}) =[\Phi_i(x,r,a_{i},a_{-i})]^{T}\hat{w}^{p}_{i}$ with $\Phi_i(x,r,a_{i},a_{-i})\in \mathbb{R}^{36}$ and $\hat{w}^{p}_{i} \in \mathbb{R}^{36}$ for all $i$. The relevance weight $c_i$ in the MSQVI-LP algorithm is a probability measure for all $i$. By setting its first moment as $\tilde{f}_{i}=0_{5\times 1}$ for all $i$, the objective function in the LP problem \eqref{eq:29} simplifies to \cite{b46,b47,b48}
\begin{align*}
&\int_{\mathcal{Z}_i}\hat{Q}^{p+1}_{i}(x,r,a_{i},a_{-i})c_i(dx,dr,da_{i},da_{-i}) =\\
& [\hat{q}^{p+1}_{i,1}]^{T}\tilde{c}_i +[\hat{q}^{p+1}_{i,2}]^{T}\tilde{s}_i+[\hat{q}^{p+1}_{i,3}]^{T}\tilde{k}_i,
\end{align*}
where $\hat{q}^{p+1}_{i,1}\in \mathbb{R}^{15}$, $\hat{q}^{p+1}_{i,2}\in \mathbb{R}^{15}$ and $\hat{q}^{p+1}_{i,3}\in \mathbb{R}^{6}$ are elements of the weight vector $\hat{w}^{p+1}_i$ with second, third and fourth moments given by $\tilde{c}_i \in \mathbb{R}^{15}$, $\tilde{s}_i \in \mathbb{R}^{15}$ and $\tilde{k}_i \in \mathbb{R}^{6}$ respectively. Here, we choose $\tilde{c}_i=\tilde{s}_i=1_{15\times 1}$ and $\tilde{k}_i=1_{6\times 1}$ for all $i$. We set the convergence threshold $\tau=10^{-10}$ and the horizon length $H_p=3$ for all $p$, tuned through the in silico studies. In particular, our results suggest that higher values of $H_p$ will not lead to any noticeable glycaemic control improvement. The size of the game data buffer $S^{p}_i$ in \eqref{eq:21} is set to $B_{MS}=48$ for all $i$, which refers to $12$-hour measurement intervals. Both MSQVI-LS and MSQVI-LP algorithms are initialized with $\hat{\mu}^{-1}_1(x,r)=\hat{\mu}^{-1}_2(x,r)=0$, while $\hat{Q}^{0}_i(x,r,a_i,a_{-i})$ is initialized as a sufficiently large, positive definite function for all $i$. We note that the initial weight value associated with the basis function $a^{2}_1$ on $\hat{Q}^{0}_{1}(x,r,a_1,a_2)$ and $a^{2}_2$ on $\hat{Q}^{0}_{2}(x,r,a_2,a_1)$ must hold sufficiently higher values compared to all other weights of the related Q-function (i.e., around $10^{5}\times$ and $10^{8}\times$ higher than all other weight elements on $\hat{Q}^{0}_{1}$ and $\hat{Q}^{0}_{2}$ respectively), so that the initial policy improvement of MSQVI-LS and MSQVI-LP algorithms can compute reasonable insulin and glucagon policies to be used for the patients. The control action $a_{ik}$ is given by
\begin{align*}
a_{ik} = \begin{cases} \hat{\mu}^{p}_i(x_k,r_k)+n_{ik}, \text{ if } p=0,\\
\big(\hat{\mu}^{p}_i(x_k,r_k)+\hat{\mu}^{p-1}_i(x_k,r_k)\big)/2+n_{ik}, \text{ if } p>0,
\end{cases}
\end{align*}
where $n_{ik}$ is a probabilistic sample drawn from a uniform distribution over $[10^{-3},5\cdot 10^{-3}]$ for $i=1$ and $[10^{-5},5\cdot 10^{-5}]$ for $i=2$. Finally, to fully evaluate the capabilities of the proposed MSQVI framework, we compare the performance of the MSQVI-LS and MSQVI-LP algorithms with the LS and LP based VI algorithms (obtained by setting $H_p=1$ for all $p$ in Algorithms 2 and 3), which we refer to as VI-LS and VI-LP respectively. These algorithms utilize the exact same configuration discussed above, with the exception that the size of the data buffer $S^{p}_i$ in \eqref{eq:21} is set to $B_{VI}=144$ for all $i$, to retain the $12$-hour measurement intervals. On a final note, all meal and exercise information is kept completely unannounced to the algorithms, as expected in the design of a truly fully-automated AP system.
\begin{figure*}[h!]
\centering
\begin{multicols}{2}
  \hspace*{-1.2cm}\includegraphics[width=1.1\linewidth]{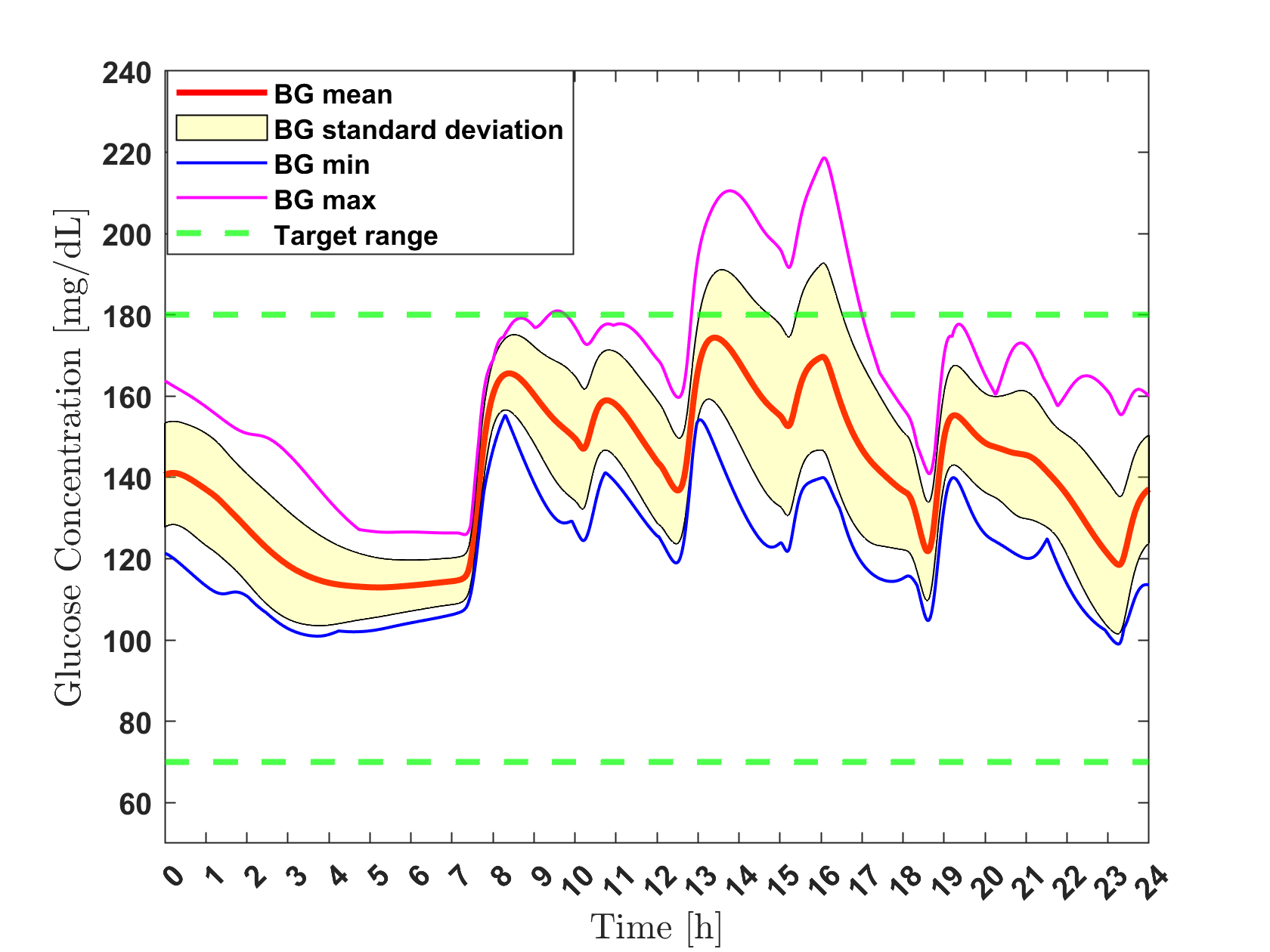}\par 
 \hspace*{-0.4cm} \includegraphics[width=1.1\linewidth]{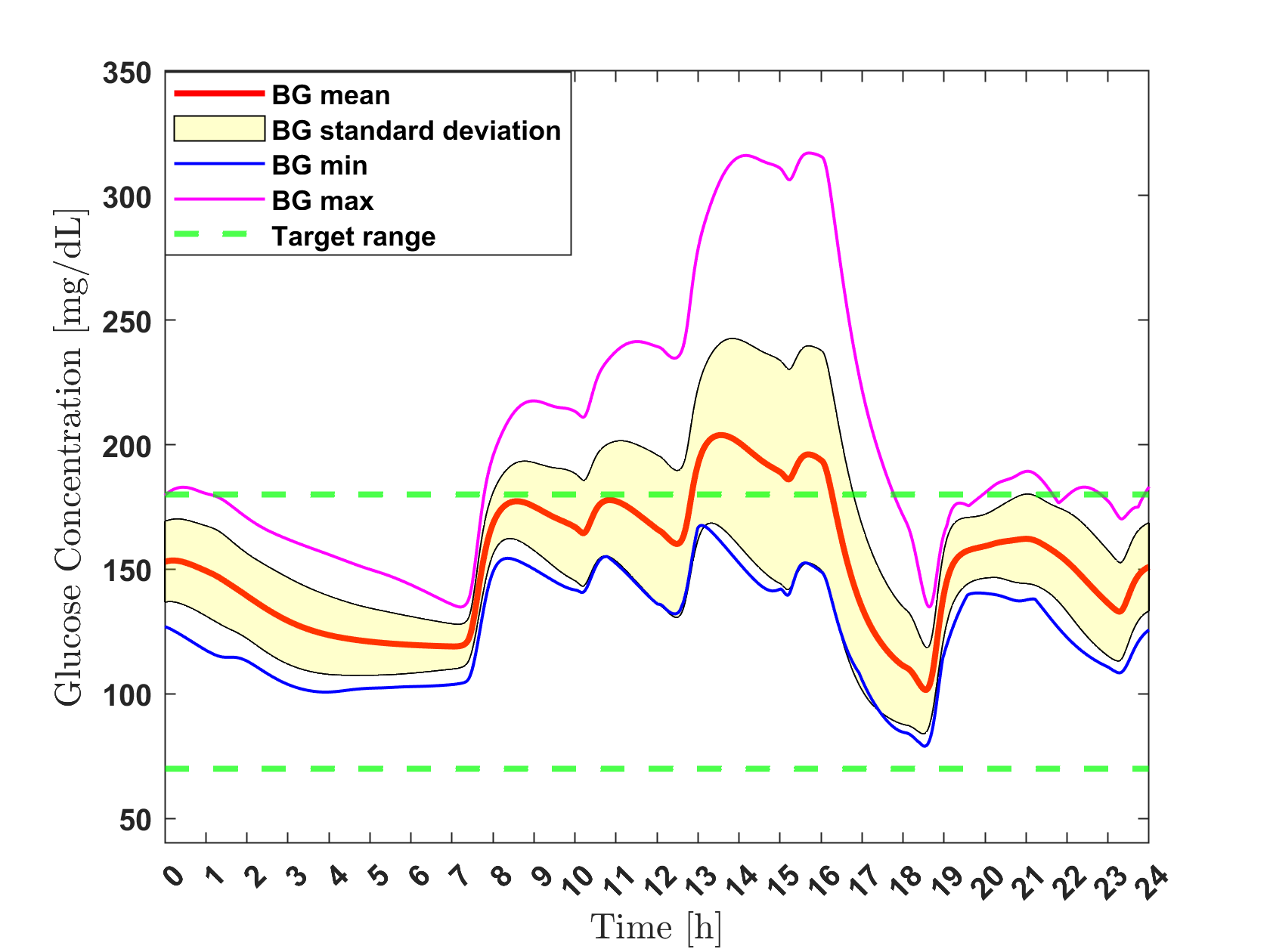}\par 
    \end{multicols}
\caption{Glycaemic control performance of MSQVI-LS/LP (left) and VI-LS/LP (right) algorithms under the converged personalized set of insulin and glucagon controllers, during a single day characterized by the nominal daily meal and exercise configuration.}
\end{figure*}
\subsection{Evaluation Metrics and Results}
We assess a variety of widely used, clinically validated metrics \cite{b93} for the entire virtual adult population:
\begin{itemize}
\item The mean, minimum and maximum values of the blood glucose measurements during the conduction of the in silico studies, 
\item The percentages of time in normoglycaemic (glucose measurements within $[70, 180]$ mg/dL), mild hypoglycaemic (glucose measurements within $[50, 70)$ mg/dL), severe hypoglycaemic (glucose measurements $<50$ mg/dL), mild hyperglycaemic (glucose measurements within $(180, 250]$ mg/dL) and severe hyperglycaemic (glucose measurements $> 250$ mg/dL) ranges,
\item The low and high blood glucose indices (LBGI and HBGI respectively), which provide a measure of the extent and frequency of low and high blood glucose measurements respectively \cite{b94,b95}, and
\item The total daily amounts of insulin (TDI) and glucagon (TDG) delivery, as well as the total number of iterations until convergence of the respective algorithms.
\end{itemize}
The results are reported for the entire adult population in the format [mean value $\pm$ standard deviation]. \\
Table I presents the results of the in silico clinical studies until convergence of the implemented MSQVI and VI algorithms, which we refer to as the Learning Phase. We firstly observe that both LS and LP variants of the MSQVI and VI algorithms produce the same results. This is expected, since both LS and LP variants share the same family of approximate Q-functions and inherit the monotonicity and convergence guarantees of the standard theoretical algorithm (studied in Section III). Furthermore, the MSQVI framework requires significantly fewer iterations to converge compared to VI ($90\pm 8$ vs $240\pm 20$), which translates to signifantly fewer days ($45\pm 4$ days vs $120 \pm 10$ days). Moreover, MSQVI provides outstanding glycaemic control during the Learning Phase, characterized by significantly higher percentages of time in the target range ($87.2 \pm 6.6$ vs $72.6 \pm 7.9$), lower percentages of time in the mild hypoglycaemic ($0.6\pm0.5$ vs $2.5\pm 0.4$) and hyperglycaemic ($12.2\pm6.1$ vs $21.6\pm 4.4$) ranges, with no time spent in severe hypoglycaemic 
(compared to $1.3 \pm 1.2$ of VI) and hyperglycaemic (compared to $2.0\pm 1.9$ for VI) ranges. This leads MSQVI to enable significantly improved mean, minimum and maximum observed blood glucose measurements ($152\pm 9$ vs $162\pm 17$, $85\pm 10$ vs $64\pm 12$ and $194\pm 12$ vs $241\pm 36$ respectively), as well as significantly lower values of LBGI ($0.29\pm 0.15$ vs $0.96\pm 0.37$) and HBGI ($1.52 \pm 0.55$ vs $3.67 \pm 0.76$).\\
After algorithmic convergence, we repeat all in silico trials for a duration of 60 days, by employing now the converged personalized set of insulin and glucagon controllers associated with each virtual adult subject in the population. Table II shows the related results. As expected, the approximate optimal insulin and glucagon policies previously computed by the MSQVI and VI algorithmic variants achieve better glycaemic behavior compared to the Learning Phase. However, MSQVI again accomplishes crucially better glycaemic control compared to VI, with significantly higher time spent in the target range ($93.1\pm 4.4$ vs $80.8 \pm 5.3$) and less time spent in mild hyperglyceamia ($6.9 \pm 4.4$ vs $17.5\pm 4.5$). Furthermore, MSQVI leads to no time spent in mild hypoglyceamic (compared to $0.2\pm 0.3$ of VI), severe hypoglycaemic and severe hyperglycaemic (compared to $1.5\pm 0.5$ of VI) ranges. Similar significant improvements are observed in the values of LBGI, HBGI and blood glucose related metrics.\\
For demonstration purposes, Figure 1 shows the population-based glycaemic responses achieved by the converged, personalized insulin and glucagon policies, as computed by the MSQVI and VI algorithms, during a single day of a conducted in silico trial and under the nominal daily meal and exercise scenario defined above. It is clear that the MSQVI framework enables better glycaemic regulation under the presence of completely unannounced meals and exercises. Overall, the proposed MSQVI algorithmic framework not only enables higher convergence speed, but also better and clinically smoother control solutions (in terms of glycaemic control performance) \cite{b96}, as expected based on the theoretical findings of Sections III, IV and V.
\section{Conclusion}
In this work, we proposed novel model-free RL/ADP algorithms, with critical applications to the design of fully-automated, closed-loop drug delivery systems for personalized medicine. We derived a novel, theoretically rigorous Q-function-based MSQVI algorithm for optimal tracking control of unknown discrete-time NZSGs. The proposed algorithmic framework integrates the complimentary strengths of classical single-step PI and VI algorithms, i.e., fast convergence to approximate optimal solutions with an easy-to-realize initialization condition. A critic-only LS implementation approach was then developed, significantly reducing the total computational burden compared to conventional multiple NN approximation methods. Afterwards, a novel LP approach for unknown discrete-time NZSGs is derived, successfully extending the optimization framework to the critical context of discrete-time, game theoretical control problems. The high performance and reliability of the proposed algorithms are evaluated in simulation, on the challenging problem of fully-automated, dual-hormone glucose control in T1DM, by utilizing a U.S. FDA-accepted metabolic simulator.\\
As a future work, we aim to extend the derived MSQVI algorithm to the critical setting of unknown NZSGs under the existence of dynamic uncertainty. This can be achieved by integrating novel robust stabilization methods that can ensure the stability of the closed-loop multiplayer system. Furthermore, we plan to implement the proposed algorithms on real embedded hardware. This will enable the design of a wearable, dual-hormone artificial pancreas prototype that can be potentially employed in future clinical trials for real T1DM patients.

\appendices
\section{Proof of Theorem 2}
To simplify presentation of the proof, we define the following compact notation
\begin{align}
\label{eq:30}
F_i^{\mu_{i},\mu_{-i}}(x,r) =& F_i\big(x,r,\mu_{i}(x,r),\mu_{-i}(x,r)\big) \nonumber \\
F_i^{p,\mu_{i},\mu_{-i}}(x,r) =& F^{p}_i\big(x,r,\mu_{i}(x,r),\mu_{-i}(x,r)\big)
\end{align}
for generic functions $F_i:\mathcal{Z}_i\rightarrow \mathbb{R}_{+}$ and $F^{p}_i:\mathcal{Z}_i\rightarrow \mathbb{R}_{+}$, for all $i\in \mathcal{P}$ and $p\in \mathbb{N}_0$. \\
1) We apply mathematical induction to prove \eqref{eq:18}. Based on \eqref{eq:15} and \eqref{eq:16}, we get
\begin{align*}
Q^{1}_{i}(x,r,a_{i},a_{-i})=& l_{i}(x,r,a_{i},a_{-i})+\sum_{m=1}^{H_0-1}\gamma^{m}l_{i}^{\mu^{0}_{i},\mu^{0}_{-i}}(x_m,r_m)\\
&+\gamma^{H_0} Q^{0,\mu^{0}_i,\mu^{0}_{-i}}_i(x_{H_0},r_{H_0})  \\
=& l_{i}(x,r,a_{i},a_{-i})+\sum_{m=1}^{H_0-2}\gamma^{m}l_{i}^{\mu^{0}_{i},\mu^{0}_{-i}}(x_m,r_m) \\
&+\gamma^{H_0-1} l_{i}^{\mu^{0}_{i},\mu^{0}_{-i}}(x_{H_0-1},r_{H_0-1})  \\
&+\gamma^{H_0} Q^{0,\mu^{0}_i,\mu^{0}_{-i}}_i(x_{H_0},r_{H_0}) \\
= &l_{i}(x,r,a_{i},a_{-i})+\sum_{m=1}^{H_0-2}\gamma^{m}l_{i}^{\mu^{p}_{i},\mu^{p}_{-i}}(x_m,r_m) \\
&+\gamma^{H_0-1}\bigg[l_{i}^{\mu^{0}_{i},\mu^{0}_{-i}}(x_{H_0-1},r_{H_0-1}) \\
&+ \gamma Q^{0,\mu^{0}_i,\mu^{0}_{-i}}_i(x_{H_0},r_{H_0})\bigg] \\
\overset{\eqref{eq:17}}{\leq}& l_{i}(x,r,a_{i},a_{-i})+\sum_{m=1}^{H_0-2}\gamma^{m}l_{i}^{\mu^{0}_{i},\mu^{0}_{-i}}(x_m,r_m)  \\
&+\gamma^{H_{0}-1}Q^{0,\mu^{0}_i,\mu^{0}_{-i}}_i(x_{H_0-1},r_{H_0-1}).
\end{align*}
Iterating leads to
\begin{equation}
\label{eq:31}
Q^{1}_{i}(x,r,a_{i},a_{-i}) \leq l_{i}(x,r,a_{i},a_{-i})+\gamma Q^{0,\mu^{0}_i,\mu^{0}_{-i}}_i(x_1,r_1).
\end{equation}
Hence, by \eqref{eq:17},
\begin{align*}
Q^{1}_{i}(x,r,a_{i},a_{-i})\leq & l_{i}(x,r,a_{i},a_{-i})+\gamma Q^{0,\mu^{0}_i,\mu^{0}_{-i}}_i(x_1,r_1) \nonumber \\
\leq & Q^{0}_{i}(x,r,a_{i},a_{-i}).
\end{align*}
Then, we assume that \eqref{eq:18} holds for $p-1$,
\begin{align}
\label{eq:32}
Q^{p}_{i}(x,r,a_{i},a_{-i})\leq & l_{i}(x,r,a_{i},a_{-i})+\gamma Q^{p-1,\mu^{p-1}_{i},\mu^{p-1}_{-i}}_{i}(x_{1},r_1) \nonumber \\
\leq & Q^{p-1}_{i}(x,r,a_{i},a_{-i}).
\end{align}
It follows that
\begin{align}
\label{eq:33}
Q^{p}_{i}(x,r,a_{i},a_{-i})=& l_{i}(x,r,a_{i},a_{-i})\nonumber \\
&+\sum_{m=1}^{H_{p-1}-1}\gamma^{m}l_{i}^{\mu^{p-1}_{i},\mu^{p-1}_{-i}}(x_m,r_m)\nonumber \\
& +\gamma^{H_{p-1}} Q^{p-1,\mu^{p-1}_i,\mu^{p-1}_{-i}}_i(x_{H_{p-1}},r_{H_{p-1}})\nonumber  \\
\overset{\eqref{eq:32}}{\geq} & l_{i}(x,r,a_{i},a_{-i}) \nonumber \\
&+\sum_{m=1}^{H_{p-1}-1}\gamma^{m}l_{i}^{\mu^{p-1}_{i},\mu^{p-1}_{-i}}(x_m,r_m)\nonumber  \\
&+\gamma^{H_{p-1}}\bigg[l_{i}^{\mu^{p-1}_{i},\mu^{p-1}_{-i}}(x_{H_{p-1}},r_{H_{p-1}})\nonumber \\
&+\gamma Q^{p-1,\mu^{p-1}_i,\mu^{p-1}_{-i}}_i(x_{H_{p-1}+1},r_{H_{p-1}+1})\bigg]\nonumber \\
=& l_{i}(x,r,a_{i},a_{-i})\nonumber \\
&+\sum_{m=1}^{H_{p-1}}\gamma^{m}l_{i}^{\mu^{p-1}_{i},\mu^{p-1}_{-i}}(x_m,r_m)\nonumber \\
&+ \gamma^{H_{p-1}+1}\bigg[ \nonumber \\
& Q^{p-1,\mu^{p-1}_i,\mu^{p-1}_{-i}}_i(x_{H_{p-1}+1},r_{H_{p-1}+1})\bigg]\nonumber \\
=&l_{i}(x,r,a_{i},a_{-i}) \nonumber \\
&+\gamma \bigg[\sum_{m=1}^{H_{p-1}}\gamma^{m-1}l_{i}^{\mu^{p-1}_{i},\mu^{p-1}_{-i}}(x_m,r_m) \nonumber  \\
&+\gamma^{H_{p-1}}\cdot \nonumber \\
& Q^{p-1,\mu^{p-1}_i,\mu^{p-1}_{-i}}_i(x_{H_{p-1}+1},r_{H_{p-1}+1})\bigg]\nonumber \\
\overset{\eqref{eq:16}}{=} &l_{i}(x,r,a_{i},a_{-i})+\gamma Q^{p,\mu^{p-1}_i,\mu^{p-1}_{-i}}_i(x_1,r_1)\nonumber \\
\overset{\eqref{eq:15}}\geq & l_{i}(x,r,a_{i},a_{-i})+\gamma Q^{p,\mu^{p}_i,\mu^{p-1}_{-i}}_i(x_1,r_1).
\end{align}
To proceed, we will now prove that
\begin{equation}
\label{eq:34}
Q^{p,\mu^{p}_{i},\mu^{p-1}_{-i}}_{i}\big(x,r) \geq Q^{p,\mu^{p}_{i},\mu^{p}_{-i}}_{i}\big(x,r)\enspace \forall i\in \mathcal{P}.
\end{equation}
If $g_j(x)=0_{n\times m_j}$ for all $j\in \mathcal{P}\setminus \{i\}$, then considering \eqref{eq:16} we have that
\begin{align}
\label{eq:35}
Q^{p,\mu^{p}_{i},\mu^{p-1}_{-i}}_{i}(x,r)=& l^{\mu^{p}_i,\mu^{p-1}_{-i}}_i(x,r) \nonumber \\
&+\sum_{m=1}^{H_{p-1}-1}\gamma^{m}l^{\mu^{p-1}_i,\mu^{p-1}_{-i}}_i(x_m,r_m) \nonumber \\
& +\gamma^{H_{p-1}} Q^{p-1,\mu^{p-1}_i,\mu^{p-1}_{-i}}_{i}(x_{H_{p-1}},r_{H_{p-1}}) \nonumber \\
=& Q^{p,\mu^{p}_{i},\mu^{p}_{-i}}_{i}(x,r) + E_{1}
\end{align}
where 
\begin{align*}
E_{1} =& \sum_{j=1,j\neq i}^{N}\bigg[\big[\Delta \mu^{p}_j(x,r)\big]^{T} R_{ij}\Delta \mu^{p}_j(x,r) \\
&-2\big[\Delta \mu^{p}_j(x,r)\big]^{T}R_{ij}\mu^{p}_{j}(x,r)\bigg]
\end{align*}
and $\Delta \mu^{p}_j(x,r) = \mu^{p}_j(x,r)-\mu^{p-1}_j(x,r)$. By continuity of $Q^{p}_{i}$ and $l_i$ on $\mathcal{Z}_i$, \eqref{eq:35} also holds for sufficiently small values of $\|g_j\|_2$ for $j\in \mathcal{P}\setminus \{i\}$. Since $Q^{p}_{i}\geq 0$, it suffices to show that $E_{1}\geq 0$ in order for \eqref{eq:34} to hold. Based on \eqref{eq:15},
\begin{align}
\label{eq:36}
\mu^{p}_{j}(x,r) =& \underset{u_{j}}{\mathrm{argmin}}\enspace Q^{p,u_{j},\mu^{p-1}_{-j}}_{j}(x,r) \nonumber\\
=& -\frac{1}{2} R^{-1}_{jj}g^{T}_{j}(x)E_2
\end{align}
where
\begin{align*}
&E_2 = \gamma\frac{\partial l^{\mu^{p-1}_j,\mu^{p-1}_{-j}}_{j}(x_1,r_1)}{\partial x_1}+\sum_{m=2}^{H_{p-1}-1}\bigg[\gamma^{m}\prod_{n=m,m-1,\ldots}^{2}\frac{\partial x_{n}}{\partial x_{n-1}}\\
&\cdot \frac{\partial l^{\mu^{p-1}_j,\mu^{p-1}_{-j}}_{j}(x_m,r_m)}{\partial x_m }\bigg] + \bigg[\gamma^{H_{p-1}}\prod_{n=H_{p-1},H_{p-1}-1,\ldots}^{2} \frac{\partial x_{n}}{\partial x_{n-1}}\\
&\cdot \frac{\partial Q^{p-1,\mu^{p-1}_j,\mu^{p-1}_{-j}}_{j}( x_{H_{p-1}}, r_{H_{p-1}})}{\partial x_{H_{p-1}}}\bigg].
\end{align*}
By using \eqref{eq:36}, $E_1\geq 0$ means
\begin{align}
\label{eq:37}
&\sum_{j=1,j\neq i}^{N}\bigg[\big[\Delta \mu^{p}_{j}(x,r)\big]^{T}R_{ij}\Delta \mu^{p}_{j}(x,r) \nonumber \\
&+\big[\Delta \mu^{p}_{j}(x,r)\big]^{T}R_{ij}R^{-1}_{jj}g^{T}_{j}(x)E_2 \bigg] \geq 0.
\end{align}
By using the sufficient condition
\begin{align*}
\big[\Delta \mu^{p}_{j}(x,r)\big]^{T}R_{ij}\Delta \mu^{p}_{j}(x,r) \geq & \big[\Delta \mu^{p}_{j}(x,r)\big]^{T}R_{ij}R^{-1}_{jj}\\
& \cdot g^{T}_{j}(x)E_2
\end{align*}
for all $j\in \mathcal{P}\setminus \{i\}$, then by using norm properties on \eqref{eq:37} yields
\begin{align}
\label{eq:38}
\sum_{j=1,j\neq i}^{N}\underline{\sigma}(R_{ij})\|\Delta \mu^{p}_{j}(x,r))\|_2 \geq & \sum_{j=1,j\neq i}^{N}\overline{\sigma}(R_{ij}R^{-1}_{jj})\nonumber \\
&\cdot \|g^{T}_j(x)\|_2 \|E_2\|_2.
\end{align}
Assuming that $\overline{\sigma}(R_{ij}R^{-1}_{jj})$ is sufficiently small, condition \eqref{eq:38} holds, and therefore \eqref{eq:34} also holds. Finally, the reasoning leading up to \eqref{eq:31} gives
\begin{align}
\label{eq:39}
Q^{p+1}_{i}(x,r,a_{i},a_{-i})=& l_{i}(x,r,a_{i},a_{-i}) \nonumber\\
&+\sum_{m=1}^{H_p-1}\gamma^{m}l^{\mu^{p}_{i},\mu^{p}_{-i}}_{i}(x_m,r_m)\nonumber \\
&+\gamma^{H_p} Q^{p,\mu^{p}_{i},\mu^{p}_{-i}}_{i}(x_{H_p},r_{H_p}) \nonumber \\
\leq & l_{i}(x,r,a_{i},a_{-i}) \nonumber \\
&+\sum_{m=1}^{H_p-2}\gamma^{m}l^{\mu^{p}_{i},\mu^{p}_{-i}}_{i}\big(x_m,r_m) \nonumber \\
&+\gamma^{H_{p}-1}Q^{p,\mu^{p}_{i},\mu^{p}_{-i}}_{i}\big(x_{H_{p}-1},x_{H_{p}-1})\nonumber \\
\leq & l_{i}(x,r,a_{i},a_{-i}) +\gamma Q^{p,\mu^{p}_{i},\mu^{p}_{-i}}_{i}\big(x_{1},r_{1}).
\end{align}
Therefore, by considering \eqref{eq:33}, \eqref{eq:34} and \eqref{eq:39}, \eqref{eq:18} holds for all $i\in \mathcal{P}$ and $p\geq 0$.\\
2) According to \eqref{eq:18}, the sequence $\{Q^{p}_{i}(x,r,a_{i},a_{-i})\}_{p\in \mathbb{N}_0}$ is non-increasing. Furthermore, since $l_{i}$ is non-negative, the sequence is additionally lower bounded by $0$ for all $p$. Hence, it has a point-wise limit $Q^{\infty}_{i}(x,r,a_{i},a_{-i}) = \lim_{p\rightarrow \infty} Q^{p}_{i}(x,r,a_{i},a_{-i})$. If we define $\mu^{\infty}_{i}(x,r) = \underset{u_{i}}{\mathrm{argmin}}\enspace Q^{\infty,u_{i},\mu^{\infty}_{-i}}_{i}(x,r)$ and take the limit of \eqref{eq:18}, we have that
\begin{align*}
Q^{\infty}_{i}(x,r,a_{i}.a_{-i}) \leq & l_{i}(x,r,a_{i},a_{-i})+\gamma Q^{\infty,\mu^{\infty}_{i},\mu^{\infty}_{-i}}_{i}\big(x_1,r_1) \\
\leq & Q^{\infty}_{i}(x,r,a_{i},a_{-i})
\end{align*}
which leads to
\begin{equation}
\label{eq:40}
Q^{\infty}_{i}(x,r,a_{i}.a_{-i}) = l_{i}(x,r,a_{i},a_{-i})+\gamma Q^{\infty,\mu^{\infty}_{i},\mu^{\infty}_{-i}}_{i}\big(x_1,r_1).
\end{equation}
Due to the uniqueness of solutions to the Bellman equation \cite{b1,b8,b47,b48}, we note that \eqref{eq:40} is essentially \eqref{eq:11}, which means that $Q^{\infty}_{i}(x,r,a_{i},a_{-i})=Q^{\star}_{i}(x,r,a_{i},a_{-i})$ for all $i$, and therefore $\{\mu^{\infty}_{i}(x,r)\}_{i=1}^{N}$ =$\{\mu^{\star}_{i}(x,r)\}_{i=1}^{N}$.
\section{Proof of Corollary 1}
We use the compact notation \eqref{eq:30} to prove the desired statement. Similar to the reasoning for the proof of \eqref{eq:31}, we get
\begin{align*}
Q_{i,h_1}(x,r,a_{i},a_{-i})=&l_{i}(x,r,a_{i},a_{-i})+\sum_{m=1}^{h_1-1}\gamma^{m}l^{\tilde{\mu}_{i},\tilde{\mu}_{-i}}_{i}\big(x_m,r_m) \\
&+\gamma^{h_1} \tilde{Q}^{\tilde{\mu}_i,\tilde{\mu}_{-i}}_i(x_{h_1},r_{h_1}) \\
=& l_{i}(x,r,a_{i},a_{-i}) \\
&+\sum_{m=1}^{h_1-2}\gamma^{m}l^{\tilde{\mu}_{i},\tilde{\mu}_{-i}}_{i}\big(x_m,r_m) \\
&+\gamma^{h_1-1}\big[l_{i}^{\tilde{\mu}_{i},\tilde{\mu}_{-i}}(x_{h_1-1},r_{h_1-1}) \\
&+ \gamma \tilde{Q}^{\tilde{\mu}_i,\tilde{\mu}_{-i}}_i(x_{h_1},r_{h_1})\big]\\
\overset{\eqref{eq:17}}{\leq}& l_{i}(x,r,a_{i},a_{-i})\nonumber \\
&+\sum_{m=1}^{h_1-2}\gamma^{m}l^{\tilde{\mu}_{i},\tilde{\mu}_{-i}}_{i}(x_m,r_m) \\
&+\gamma^{h_1-1}\tilde{Q}^{\tilde{\mu}_i,\tilde{\mu}_{-i}}_i(x_{h_1-1},r_{h_1-1})
\end{align*}
Iterating leads to
\begin{align*}
Q_{i,h_1}(x,r,a_{i},a_{-i}) \leq & l_{i}(x,r,a_{i},a_{-i})\nonumber \\
&+\sum_{m=1}^{h_2-1}\gamma^{m}l^{\tilde{\mu}_{i},\tilde{\mu}_{-i}}_{i}\big(x_m,r_m)\\
&+\gamma^{h_2}\tilde{Q}^{\tilde{\mu}_i,\tilde{\mu}_{-i}}_i(x_{h_2},r_{h_2})\\
= &Q_{i,h_2}(x,r,a_{i},a_{-i}).
\end{align*}
\section{Proof of Theorem 3}
We utilize the compact notation \eqref{eq:30} to prove the desired statements. For all $i\in \mathcal{P}$, let $\bar{Q}^{0}_{i}(x,r,a_{i},a_{-i}) =\hat{Q}^{0}_{i}(x,r,a_{i},a_{-i})=Q^{0}_{i}(x,r,a_{i},a_{-i})$ and $\bar{Q}^{p+1}_{i}(x,r,a_{i},a_{-i})$ satisfies the following equation
\begin{align}
\label{eq:41}
&\bar{Q}^{p+1}_{i}(x,r,a_{i},a_{-i})= l_{i}(x,r,a_{i},a_{-i})\nonumber \\
&+\sum_{m=1}^{H_p-1}\gamma^{m}l^{\hat{\mu}^{p}_{i},\hat{\mu}^{p}_{-i}}_{i}(x_m,r_m)+\gamma^{H_p} \bar{Q}^{p,\hat{\mu}^{p}_{i},\hat{\mu}^{p}_{-i}}_{i}(x_{H_p},r_{H_p}).
\end{align}
Similar to \eqref{eq:19}, $\bar{Q}^{p}_{i}(x,r,a_{i},a_{-i})$ can be expressed as
\begin{equation}
\label{eq:42}
\bar{Q}^{p}_{i}(x,r,a_{i},a_{-i}) = [\Phi_{i}(x,r,a_{i},a_{-i})]^{T}\bar{w}^{p}_{i} + \bar{e}^{p}_{i}(x,r,a_{i},a_{-i}),
\end{equation}
where $\bar{w}^{p}_{i} \in \mathbb{R}^{K}$ and $\bar{e}^{p}_{i}(x,r,a_{i},a_{-i})\in \mathbb{R}$ is the approximation error that satisfies $\lim_{K\to \infty} \bar{e}^{p}_{i}(x,r,a_{i},a_{-i}) = 0$. With \eqref{eq:41} and \eqref{eq:42}, we have that
\begin{align}
\label{eq:43}
&\bar{\epsilon}^{p+1}_{i}(x,r,a_{i},a_{-i})- [\Phi_{i}(x,r,a_{i},a_{-i})]^{T}\bar{w}^{p+1}_{i} \nonumber \\
&+\gamma^{H_p}[\Phi^{\hat{\mu}^{p}_{i},\hat{\mu}^{p}_{-i}}_{i}(x_{H_p},r_{H_p})]^{T}\bar{w}^{p}_{i}+ \sum_{m=1}^{H_{p}-1}\gamma^{m}l^{\hat{\mu}^{p}_{i},\hat{\mu}^{p}_{-i}}_{i}\big(x_{m},r_{m}) \nonumber \\ &+l_{i}(x,r,a_{i},a_{-i})=0,
\end{align} 
where $\bar{\epsilon}^{p+1}_{i}(x,r,a_{i},a_{-i})= \gamma^{H_p}\bar{e}^{p,\hat{\mu}^{p}_{i},\hat{\mu}^{p}_{-i}}_{i}(x_{H_p},r_{H_p})-\bar{e}^{p+1}_{i}(x,r,a_{i},a_{-i})$. Therefore, $\lim_{K \to \infty}\bar{\epsilon}^{p+1}_{i}(x,r,a_{i},a_{-i}) = 0$. By defining $\tilde{w}^{p}_{i} = \hat{w}^{p}_{i}-\bar{w}^{p}_{i}$, \eqref{eq:43} becomes
\begin{align}
\label{eq:44}
&\bar{\epsilon}^{p+1}_{i}(x,r,a_{i},a_{-i}) = [\Phi_{i}(x,r,a_{i},a_{-i})]^{T}(\hat{w}^{p+1}_{i}-\tilde{w}^{p+1}_{i})\nonumber\\
&- \gamma^{H_p}[\Phi^{\hat{\mu}^{p}_{i},\hat{\mu}^{p}_{-i}}_{i}(x_{H_p},r_{H_p})]^{T}(\hat{w}^{p}_{i}-\tilde{w}^{p}_{i})\nonumber \\
&- \sum_{m=1}^{H_{p}-1}\gamma^{m}l^{\hat{\mu}^{p}_{i},\hat{\mu}^{p}_{-i}}_{i}(x_{m},r_m) -l_{i}(x,r,a_{i},a_{-i})\nonumber \\
&= \epsilon^{p+1}_{i}(x,r,a_{i},a_{-i})-[\Phi_{i}(x,r,a_{i},a_{-i})]^{T}\tilde{w}^{p+1}_{i}\nonumber\\
&+ \gamma^{H_p}[\Phi^{\hat{\mu}^{p}_{i},\hat{\mu}^{p}_{-i}}_{i}(x_{H_p},r_{H_p})]^{T}\tilde{w}^{p}_{i}.
\end{align} 
We now proceed by proving that $\lim_{K \to \infty}\tilde{w}^{p}_{i}=0$ using mathematical induction. From the initial condition $\bar{Q}^{0}_{i}(x,r,a_{i},a_{-i}) = \hat{Q}^{0}_{i}(x,r,a_{i},a_{-i})=Q^{0}_{i}(x,r,a_{i},a_{-i})$, we have that $\lim_{K\to \infty} \tilde{w}^{0}_{i}=0$. Assume that $\lim_{K \to \infty} \tilde{w}^{p}_{i}=0$ holds for $p$. Based on \eqref{eq:44}, for each data tuple $b$ in $S^{p}_{i}$, it holds that
\begin{align*}
\bar{\epsilon}^{p+1}_{i}(x_{b},r_b,a_{ib},a_{-ib})=& \epsilon^{p+1}_{i}(x_{b},r_b,a_{ib},a_{-ib})-[\Psi_{i,b}]^{T}\tilde{w}^{p+1}_{i}\nonumber \\
&+ \gamma^{H_p}[\Phi^{\hat{\mu}^{p}_{i},\hat{\mu}^{p}_{-i}}_{i}(x_{H_p,b},r_{H_p,b})]^{T}\tilde{w}^{p}_{i},
\end{align*}
leading to
\begin{align}
\label{eq:45}
[\tilde{w}^{p+1}_{i}]^{T}\Psi_{i,b}[\Psi_{i,b}]^{T}\tilde{w}^{p+1}_{i} =& [\bar{\epsilon}^{p+1}_{i}(x_{b},r_b,a_{ib},a_{-ib})\nonumber \\
&- \epsilon^{p+1}_{i}(x_{b},r_b,a_{ib},a_{-ib})]^{2}+E_{3},
\end{align}
where 
\begin{align*}
E_{3} =& 2\gamma^{H_p}[\epsilon^{p+1}_{i}(x_{b},r_b,a_{ib},a_{-ib})-\bar{\epsilon}^{p+1}_{i}(x_{b},r_b,a_{ib},a_{-ib})]\\
& \cdot [\Phi^{\hat{\mu}^{p}_{i},\hat{\mu}^{p}_{-i}}_{i}(x_{b,H_p},r_{b,H_p})]^{T}\tilde{w}^{p}_{i} \nonumber \\
&+\gamma^{2H_p}[\tilde{w}^{p}_{i}]^{T}\Phi^{\hat{\mu}^{p}_{i},\hat{\mu}^{p}_{-i}}_{i}(x_{b,H_p},r_{b,H_p})\nonumber \\
& \cdot [\Phi^{\hat{\mu}^{p}_{i},\hat{\mu}^{p}_{-i}}_{i}(x_{b,H_p},r_{b,H_p})]^{T}\tilde{w}^{p}_{i}
\end{align*}
and obviously $\lim_{K \to \infty} E_3 =0$. Based on \eqref{eq:25},
\begin{equation}
\label{eq:46}
\sum_{b=1}^{B}[\tilde{w}^{p+1}_{i}]^{T}\Psi_{i,b}[\Psi_{i,b}]^{T}\tilde{w}^{p+1}_{i} \geq \delta B ||\tilde{w}^{p+1}_{i}||_{2}^{2}.
\end{equation}
From \eqref{eq:45} and \eqref{eq:46},
\begin{align*}
||\tilde{w}^{p+1}_{i}||_{2}^{2}\leq & \frac{1}{\delta B}\sum_{b=1}^{B}\bigg[ [\bar{\epsilon}^{p+1}_{i}(x_{b},a_{ib},a_{-ib})\\
&- \epsilon^{p+1}_{i}(x_{b},a_{ib},a_{-ib})]^{2} + E_{3}\bigg].
\end{align*}
Note that $\hat{w}^{p+1}_{i}$ is computed with the least squares scheme \eqref{eq:24}, which minimizes \eqref{eq:23}. Then $\sum_{b=1}^{B} (\epsilon^{p+1}_{i,b})^{2} \leq \sum_{b=1}^{B} (\bar{\epsilon}^{p+1}_{i,b})^{2}$, i.e. $\sum_{b=1}^{B} |\epsilon^{p+1}_{i,b}| \leq \sum_{b=1}^{B} |\bar{\epsilon}^{p+1}_{i,b}|$. Therefore, we get
\begin{align}
\label{eq:47}
||\tilde{w}^{p+1}_{i}||_{2}^{2}\leq & \frac{1}{\delta B}\sum_{b=1}^{B}\bigg[[\bar{\epsilon}^{p+1}_{i}(x_{b},r_b,a_{ib},a_{-ib}) \nonumber \\
&- \epsilon^{p+1}_{i}(x_{b},r_b,a_{ib},a_{-ib})]^{2} + E_{3}\bigg] \nonumber \\
\leq & \frac{1}{\delta B}\sum_{b=1}^{B}\bigg[\big[|\bar{\epsilon}^{p+1}_{i}(x_{b},r_b,a_{ib},a_{-ib})| \nonumber \\
&+ |\epsilon^{p+1}_{i}(x_{b},r_b,a_{ib},a_{-ib})|\big]^{2} + E_{3}\bigg] \nonumber \\
\leq & \frac{1}{\delta B} \sum_{b=1}^{B} [ 4\cdot (\bar{\epsilon}^{p+1}_{i,max})^{2} + E_{3}] \nonumber \\
=& \frac{4}{\delta} (\epsilon^{p+1}_{i,max})^{2} + \frac{1}{\delta B} \sum_{b=1}^{B} E_{3} ,
\end{align}
where $\epsilon^{p+1}_{i,max} = \max_{b}|\bar{\epsilon}^{p+1}_{i}(x_b,r_b,a_{ib},a_{-ib})|$. Then we get $\lim_{K \to \infty} ||\tilde{w}^{p+1}_{i}||_{2}^{2}\leq 0$, i.e., $\lim_{K \to \infty} ||\tilde{w}^{p+1}_{i}||_{2}=0$. Hence, $\hat{Q}^{p+1}_{i}(x,r,a_{i},a_{-i}) - \bar{Q}^{p+1}_{i}(x,r,a_{i},a_{-i}) = [\Phi_{i}(x,r,a_{i},a_{-i})]^{T}\tilde{w}^{p+1}_{i}- \bar{e}^{p+1}_{i}(x,r,a_{i},a_{-i})$. Therefore,
\begin{equation}
\label{eq:48}
\lim_{K \to \infty} \hat{Q}^{p+1}_{i}(x,r,a_{i},a_{-i}) =\lim_{K \to \infty} \bar{Q}^{p+1}_{i}(x,r,a_{i},a_{-i}).
\end{equation}
We conclude the proof with induction. For $p=0$, we have $\bar{Q}^{0}_{i}(x,r,a_{i},a_{-i}) =\hat{Q}^{0}_{i}(x,r,a_{i},a_{-i})=Q^{0}_{i}(x,r,a_{i},a_{-i})$ and $\hat{\mu}^{-1}_{i}(x,r) = \mu^{-1}_{i}(x,r)$, which means $\hat{\mu}^{0}_{i}(x,r)=\mu^{0}_{i}(x,r)$ for all $i$. Therefore, $\lim_{K \to \infty} \bar{Q}^{0}_{i}(x,r,a_{i},a_{-i}) ={Q}^{0}_{i}(x,r,a_{i},a_{-i})$ and $\lim_{K\to \infty} \hat{\mu}^{0}_{i}(x,r) =\mu^{0}_{i}(x,r)$. Assume then that  $\lim_{K \to \infty} \bar{Q}^{p}_{i}(x,r,a_{i},a_{-i}) ={Q}^{p}_{i}(x,r,a_{i},a_{-i})$ and $\lim_{K\to \infty} \hat{\mu}^{p}_{i}(x,r) =\mu^{p}_{i}(x,r)$ for all $i$. Then,
\begin{align}
\label{eq:49}
&\lim_{K \to \infty} \bar{Q}^{p+1}_{i}(x,r,a_{i},a_{-i}) = l_{i}(x,r,a_{i},a_{-i}) \nonumber \\
&+ \lim_{K \to \infty} \sum_{m=1}^{H_{p}-1} \gamma^{m}l^{\hat{\mu}^{p}_{i},\hat{\mu}^{p}_{-i}}_{i}(x_{m},r_m)\nonumber\\
&+\lim_{K \to \infty} \gamma^{H_{p}} \bar{Q}^{p,\hat{\mu}^{p}_{i},\hat{\mu}^{p}_{-i}}_{i}(x_{H_p},r_{H_p}) \nonumber \\
&= l_{i}(x,r,a_{i},a_{-i}) + \sum_{m=1}^{H_{p}-1} \gamma^{m}l^{\mu^{p}_{i},\mu^{p}_{-i}}_{i}(x_{m},r_{m}) \nonumber \\
&+ \gamma^{H_{p}} Q^{p,\mu^{p}_{i},\mu^{p}_{-i}}_{i}(x_{H_p},r_{H_p}) \nonumber \\
&= Q^{p+1}_{i}(x,r,a_{i},a_{-i}).
\end{align}
From \eqref{eq:48} and \eqref{eq:49}, it can be finally concluded that $\lim_{K \to \infty} \hat{Q}^{p+1}_{i}(x,r,a_{i},a_{-i})= Q^{p+1}_{i}(x,r,a_{i},a_{-i})$, which implies $\lim_{K\to \infty} \hat{\mu}^{p+1}_{i}(x,r) =\mu^{p+1}_{i}(x,r)$ for all $i$. Based on Theorem $2$, it then holds that $\lim_{p,K \to \infty} \hat{Q}^{p}_{i}(x,r,a_{i},a_{-i}) = Q^{\star}_{i}(x,r,a_{i},a_{-i})$ and $\lim_{p,K \to \infty} \hat{\mu}^{p}_{i}(x,r) =\mu^{\star}_{i}(x,r)$ which satisfy \eqref{eq:11} for all $i$.

\section*{References}

\begin{IEEEbiography}[{\includegraphics[width=1in,height=1.25in,clip,keepaspectratio]{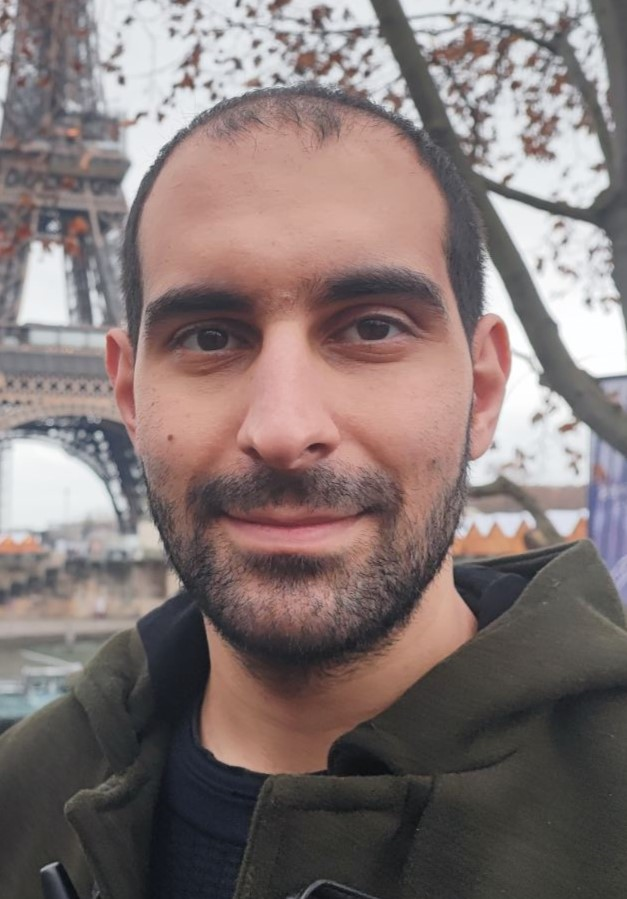}}]{Alexandros Tanzanakis}
received a 5-year Diploma (equivalent to a Master's degree) in Electrical and Computer Engineering with the highest honours from the Technical University of Crete, Greece in 2016, and a PhD in Information Technology and Electrical Engineering from ETH Zurich, Switzerland in 2023. His research interests include advanced topics in learning-based control, data-driven reinforcement learning, game theory and multiagent systems, as well as intelligent biomedical control with emphasis on the design of novel, fully-automated, personalized, closed-loop drug delivery systems.
\end{IEEEbiography}

\begin{IEEEbiography}[{\includegraphics[width=1in,height=1.25in,clip,keepaspectratio]{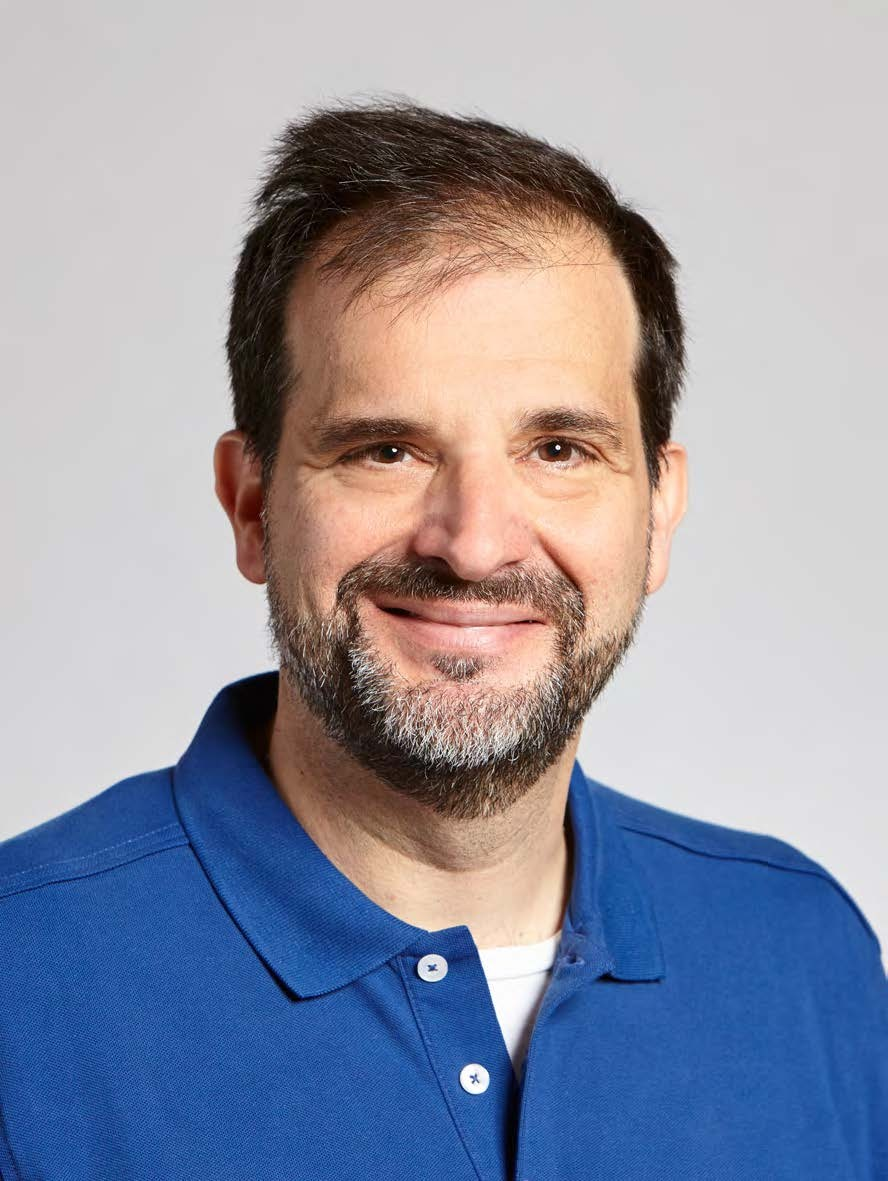}}]{John Lygeros}
received a B.Eng. degree in 1990 and an M.Sc. degree in 1991 from Imperial College, London, U.K. and a Ph.D. degree in 1996 at the University of California, Berkeley. After research appointments at M.I.T., U.C. Berkeley and SRI International, he joined the University of Cambridge in 2000 as a University Lecturer. Between March 2003 and July 2006 he was an Assistant Professor at the Department of Electrical and Computer Engineering, University of Patras, Greece. In July 2006 he joined the Automatic Control Laboratory at ETH Zurich where he is currently serving as the Professor for Computation and Control and the Head of the laboratory. His research interests include modelling, analysis, and control of large-scale systems, with applications to biochemical networks, energy systems, transportation, and industrial processes. John Lygeros is a Fellow of IEEE, and a member of IET and the Technical Chamber of Greece. Since 2013 he is serving as the Vice-President Finances and a Council Member of the International Federation of Automatic Control and since 2020 as the Director of the National Center of Competence in Research "Dependable Ubiquitous Automation" (NCCR Automation).
\end{IEEEbiography}

\end{document}